\documentclass[10pt,onecolumn]{elsarticle}
\usepackage{geometry}                
\geometry{letterpaper}                   
\usepackage{graphicx}
\usepackage{amssymb}
\usepackage{amsmath}
\usepackage{epstopdf}
\usepackage{array}
\usepackage{amsthm}

\usepackage{array}
\usepackage{multirow,array}
\usepackage{color}
\usepackage{url}


\theoremstyle{plain}
\newtheorem{thm}{Theorem}[section] 

\theoremstyle{definition}
\DeclareGraphicsRule{.tif}{png}{.png}{`convert #1 `dirname #1`/`basename #1 .tif`.png}

\author[add1]{Michael S. Harr\'e}
\ead{michael.harre@sydney.edu.au}
\author[add2]{Adam Harris}
\ead{aharris5@une.edu.au}
\author[add3]{Scott McCallum}
\ead{scott.mccallum@mq.edu.au}

\address[add1]{Complex Systems Research Group, Faculty of Engineering and IT, \\ The University of Sydney, Sydney, NSW 2006}
\address[add2]{School of Science and Technology, \\
University of New England, Armidale, NSW 2351}
\address[add3]{Department of Computing \\
Macquarie University, NSW 2109}

\title{Singularities and Catastrophes in Economics: Historical Perspectives and Future Directions}
\begin{document}

\begin{abstract}
Economic theory is a mathematically rich field in which there are opportunities for the formal analysis of singularities and catastrophes. This article looks at the historical context of singularities through the work of two eminent Frenchmen around the late 1960s and 1970s. Ren\'e Thom (1923-2002) was an acclaimed mathematician having received the Fields Medal in 1958, whereas G\'erard Debreu (1921-2004) would receive the Nobel Prize in economics in 1983. Both were highly influential within their fields and given the fundamental nature of their work, the potential for cross-fertilisation would seem to be quite promising. This was not to be the case: Debreu knew of Thom's work and cited it in the analysis of his own work, but despite this and other applied mathematicians taking catastrophe theory to economics, the theory never achieved a lasting following and relatively few results were published. This article reviews Debreu's analysis of the so called {\it regular} and {\it crtitical} economies in order to draw some insights into the economic perspective of singularities before moving to how singularities arise naturally in the Nash equilibria of game theory. Finally a modern treatment of stochastic game theory is covered through recent work on the quantal response equilibrium. In this view the Nash equilibrium is to the quantal response equilibrium what deterministic catastrophe theory is to stochastic catastrophe theory, with some caveats regarding when this analogy breaks down discussed at the end.
\end{abstract}
\maketitle

\section{Introduction: A key point in the history of economics ``in the mathematical mode''}

To single out a specific point in the history of a field of study and to say: {\it Here is where it all began} oversimplifies the complicated relationships between competing paradigms. However we can look at specific lines of research that have, with hindsight, become the dominant paradigm and look to what those authors wrote at the time to justify their specific view point in order to understand how an influential researcher framed their point of view. One such researcher is G\'erard Debreu, an economist who was the sole recipient of the 1983 Nobel Prize in Economics\footnote{Formally: {\it The Swedish National Bank's Prize in Economic Sciences in Memory of Alfred Nobel.}} for ``having incorporated new analytical methods into economic theory and for his rigorous reformulation of general equilibrium theory''. His work was an important step in the mathematisation of economics~\cite{debreu1987theory} and his Nobel speech was titled: {\it Economic Theory in the Mathematical Mode}~\cite{debreu1984economic}. Debreu acknowledged~\cite{debreu1976regular,debreu1984economic} the importance of Thom's work on catastrophe theory~\cite{thom1976structural,thom1977structural}, but chose not to incorporate these ideas into his axiomatic treatment of economics. At the same time applied mathematicians were using catastrophe theory in many fields in a qualitative fashion that would ultimately lead to a backlash against catastrophe theory. The debate within the field of economic applications of catastrophe theory has been thoroughly reviewed in~\cite{rosser2007rise} \\

A common approach to simplifying the complexities of an entire economy is to reduce to a simplified form and study the local behaviour of that economy~\cite{debreu1970economies}: {\it It suffices to construct an ... economy with two commodities and two consumers ... and having several equilibria. There is a neighbourhood of that economy in which every economy has the same number of equilibria.} Simplifications like this allow us to use game theory with two economic agents each having two choices each, see~\cite{dasgupta2015debreu} for a recent treatment and discussion. \\

\begin{figure}[!ht]
\center
\includegraphics[width=.9\columnwidth]{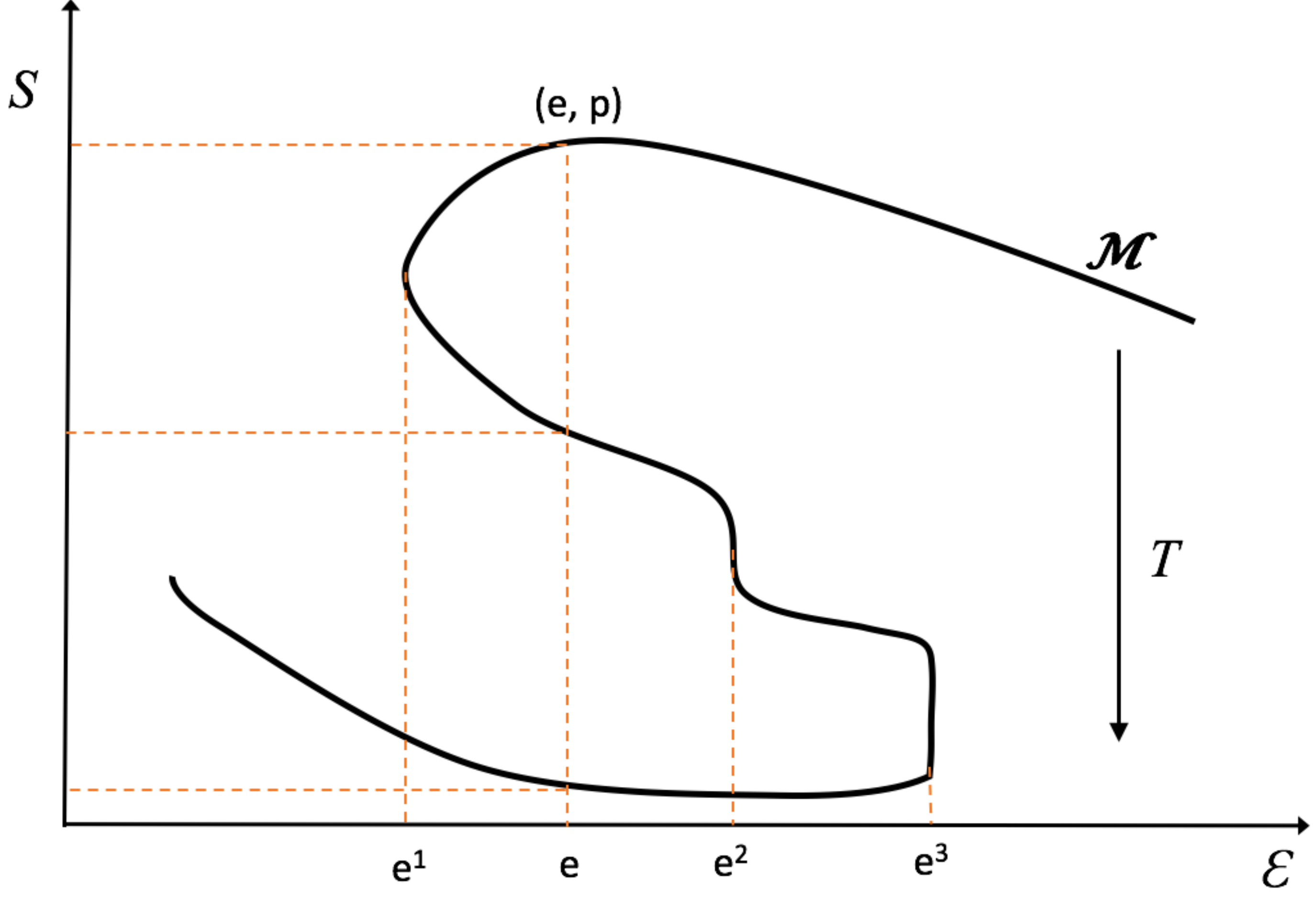}
\caption{\label{debreu_M} Reproduced after Figure 5. in~\cite{debreu1976regular}. The $(e,p)$ represents the economic configuration $e$ and the prices $p$ for the goods produced within $e$ that result in an equilibrium $s^* \in S$. As the figure suggest $s^*$ may not be unique but it is `almost always' locally unique, see the main text.} 
\end{figure}

An important contribution of Debreu was the distinction between regular and critical economies. In Figure~\ref{debreu_M} the different configurations of an economy are labelled $e$, $e^1$, $e^2$, and $e^3$ with the corresponding equilibrium states lying on the equilibrium surface $\mathcal{M}$. $T$ is the projection of the equilibrium surface $\mathcal{M}$ onto $\mathcal{E}$, the space of all economic configurations. Debreu showed that, apart from a subset of $\mathcal{E}$ with Lebesgue measure zero, there is an inverse mapping $T^{-1}: \mathcal{E} \rightarrow \mathcal{S}$ and in particular if $T^{-1}(e) \in \mathcal{M}$ is a locally unique equilibrium point then $e$ is a regular economy with at least one equilibrium point $s^*$. This follows from the inverse function theorem when the determinant of the Jacobian is away from 0. The special cases where the Jacobian is degenerate, i.e. $e^1$, $e^2$, and $e^3$, are the critical economies. To quote Debreu~\cite[pg. 284]{debreu1976regular}: {\it For instance the economy $e^1$ has a discrete set of (two) equilibria, but a continuous displacement of the economy in a neighbourhood of $e^1$ produces a sudden change in the set of equilibria.} The parallels with singularity theory are apparent and were known to Debreu. \\

The purpose of this article is to introduce some of the formal aspects of game theory, equilibrium theory, and their stochastic variations and to place these in the context of catastrophe theory. The article is laid out as follows. Section 2 establishes the key equilibrium result of Nash and reviews an important method for the explicit computation of Nash equilibria. Section 3 introduces singularities from the perspective of catastrophe theory, its extension to stochastic catastrophe theory, and their relation to bifurcations in game theory. Section 4 is a discussion of these results in the context of other related approaches and results.

\section{Nash Equilibrium \label{NE_sec}}

Optimisation leads naturally to singularities~\cite{arnol2003catastrophe} and the Nash equilibrium is a solution to an optimisation problem. As such the appearance of singularities is expected to be a generic property of game theory solutions. This aspect of optimisation has been explored by Arnol'd~\cite{arnold1984singularities}: {\it ... suppose we have to find $x$ such that the value of a function $f(x)$ is maximal ... On smooth change of the function the optimum solution changes with a jump, transferring from one competing maxima (A) to the other (B).} From this, singularities are expected to be a fundamental property of economic theory which is at its core a study of optimising the allocation of scarce resources. In this section we introduce the basic elements of game theory as a decision problem and the Nash equilibrium as an optimal solution.

\subsection{Basic definition and existence theorem}

Economics is concerned with situations in which there are $n$ agents, 
where each agent $i$ has a discrete set of finite choices $C_i =\{c_i^1, \ldots, c_i^{k_i}\}$. 
An economic game is a function from the choices agents make to a real valued payoff, 
one for each agent:
\begin{eqnarray}
g:C_1 \times \ldots \times C_n \rightarrow \mathbb{R}^n,
\end{eqnarray}
where $g_i$, the $i^{\mathrm{th}}$ element of $g$, 
is the payoff to agent $i$. 
A strategy for agent $i$ is a probability distribution over 
$C_i$: $$p_i: p_i(c_i^1), \ldots, p_i(c_i^{k_i})$$ such that 
$$\sum_{j_i=1}^{k_i} p_i(c_i^{j_i})=1.$$ 
To simplify notation we refer to $p_i(c_i^{j_i})$ as $p_i^{j_i}$. 
The definition of agent $i$'s expected payoff in the game $g$, which takes discrete choices as its arguments,
can be extended to take the strategies of the agents as arguments:
\begin{eqnarray}
g_i(p_i,p_{-i}) &=& \sum_{c_1^{j_1} \in C_1}\ldots \sum_{c_n^{j_n} \in C_n} g_i(c_1^{j_1}, \ldots, c_n^{j_n}) p_1^{j_1} \ldots p_n^{j_n} \label{exp_util}
\end{eqnarray}
where $-i$ denotes all indexed elements excluding element $i$. 
This definition of $g$ is an extension 
in the sense that it contains each discrete strategy 
as a special case in which one probability $p^{j_i}_i = 1$ and $p^{-j_i}_i = 0$ otherwise. 
These are called pure strategies. 
A strategy that is not a pure strategy is called a mixed strategy. 
Nash's equilibrium theory establishes the following result,
\begin{thm}{Nash Equilibrium:} There exists at least one $n$-tuple $(p^*_1, \ldots, p^*_n)$ such that:
\begin{eqnarray}
g_i(p^*_i,p^*_{-i}) & \geq & g_i(p_i,p^*_{-i}) \,\, \forall \, i, \, p_i. \label{NE_eqn}
\end{eqnarray}
\label{NE_Thm} \end{thm}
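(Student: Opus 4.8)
The plan is to deduce the existence of a Nash equilibrium from a topological fixed-point theorem applied to the space of strategy profiles. First I would fix the geometry: for each agent $i$ the set of admissible strategies is the simplex $\Delta_i = \{ p_i \in \mathbb{R}^{k_i} : p_i^{j_i} \geq 0, \ \sum_{j_i=1}^{k_i} p_i^{j_i} = 1 \}$, and the set of strategy profiles is the product $\Delta = \Delta_1 \times \cdots \times \Delta_n$, which is nonempty, compact and convex. The extended payoffs $g_i(p_i,p_{-i})$ defined in~\eqref{exp_util} are multilinear in the tuple $(p_1,\dots,p_n)$; in particular they are continuous on $\Delta$ and affine in agent $i$'s own strategy $p_i$. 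These two properties are exactly what the argument below exploits.

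Next I would follow Nash's construction of an auxiliary self-map of $\Delta$ whose fixed points are precisely the equilibria. For each agent $i$ and each pure strategy $c_i^{j_i}$, set $\varphi_i^{j_i}(p) = \max\{\, 0,\ g_i(c_i^{j_i},p_{-i}) - g_i(p_i,p_{-i})\,\}$, the net gain available to $i$ from shifting all weight onto the pure strategy $j_i$. Define $T : \Delta \to \Delta$ by
\begin{eqnarray}
T(p)_i^{j_i} \;=\; \frac{\, p_i^{j_i} + \varphi_i^{j_i}(p)\, }{\, 1 + \sum_{l_i=1}^{k_i} \varphi_i^{l_i}(p)\, }. \nonumber
\end{eqnarray}
One checks that each $\varphi_i^{j_i}$ is continuous (a maximum of continuous functions of $p$), that the denominators are $\geq 1$, and that the numerators are nonnegative and sum over $j_i$ to the denominator, so $T$ is a continuous map from the compact convex set $\Delta$ into itself. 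Brouwer's fixed-point theorem then furnishes a profile $p^{*}$ with $T(p^{*}) = p^{*}$.

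The crux is to show that at such a fixed point every $\varphi_i^{j_i}(p^{*})$ vanishes. Writing $S_i = \sum_{l_i} \varphi_i^{l_i}(p^{*})$ and clearing denominators in $T(p^{*})=p^{*}$ gives $p_i^{*j_i} S_i = \varphi_i^{j_i}(p^{*})$ for every $i$ and $j_i$, so $\varphi_i^{j_i}(p^{*}) > 0$ would force $p_i^{*j_i} > 0$. Suppose $S_i > 0$ for some $i$. By~\eqref{exp_util}, $g_i(p_i^{*},p_{-i}^{*})$ is a convex combination, with weights $p_i^{*j_i}$, of the pure-strategy payoffs $g_i(c_i^{j_i},p_{-i}^{*})$; hence some pure strategy $j_i^{0}$ in the support of $p_i^{*}$ satisfies $g_i(c_i^{j_i^{0}},p_{-i}^{*}) \leq g_i(p_i^{*},p_{-i}^{*})$, so that $\varphi_i^{j_i^{0}}(p^{*}) = 0$ while $p_i^{*j_i^{0}} > 0$ and $S_i > 0$ — contradicting $p_i^{*j_i^{0}} S_i = \varphi_i^{j_i^{0}}(p^{*})$. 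Therefore $S_i = 0$ and all $\varphi_i^{j_i}(p^{*}) = 0$, i.e. $g_i(c_i^{j_i},p_{-i}^{*}) \leq g_i(p_i^{*},p_{-i}^{*})$ for every $i$ and every pure strategy. Taking convex combinations over $j_i$ yields $g_i(p_i,p_{-i}^{*}) \leq g_i(p_i^{*},p_{-i}^{*})$ for every mixed strategy $p_i$, which is exactly~\eqref{NE_eqn}.

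I expect the main obstacle to be conceptual rather than computational: the entire difficulty is concentrated in the appeal to Brouwer's theorem, which I would take as given. (Equivalently, one could run the argument through Kakutani's theorem applied directly to the best-response correspondence $p \mapsto \prod_i \arg\max_{p_i \in \Delta_i} g_i(p_i,p_{-i})$, whose nonemptiness, convex-valuedness and closed graph follow from compactness of $\Delta_i$ together with the continuity and own-strategy affineness of $g_i$.) Verifying continuity of $T$ and the support argument above is routine once~\eqref{exp_util} is in hand; the real content is recognising that a fixed point of $T$ simultaneously encodes the no-profitable-deviation condition for every agent at once.
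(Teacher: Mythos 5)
Your argument is correct, but it is not the one the paper gives. The paper reproduces Nash's 1950 note verbatim: it forms the ``countering'' (best-response) correspondence on the product of the strategy simplices, observes that it is convex-valued with closed graph, and invokes Kakutani's fixed-point theorem --- essentially the route you relegate to your parenthetical remark. What you have written out in full is instead Nash's 1951 refinement: the explicit continuous self-map $T$ built from the truncated gain functions $\varphi_i^{j_i}$, a fixed point via Brouwer rather than Kakutani, and the support argument showing that a fixed point of $T$ kills every $\varphi_i^{j_i}$. The trade-off is real. The Kakutani version is shorter and conceptually transparent (equilibrium \emph{is} a fixed point of best response, by definition), but it leans on a stronger fixed-point theorem for set-valued maps and leaves the verification of upper hemicontinuity and convex-valuedness implicit --- as indeed the paper does. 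Your Brouwer version needs only the classical fixed-point theorem for continuous functions and is fully self-contained modulo that one black box, at the cost of the extra combinatorial step (the averaging argument locating a pure strategy $j_i^0$ in the support with $\varphi_i^{j_i^0}(p^*)=0$), which you execute correctly. Both hinge on the same two structural facts you isolate at the outset: compact convexity of $\Delta$ and multilinearity (hence continuity and own-strategy affineness) of the extended payoffs in Equation~\ref{exp_util}. Either proof establishes Theorem~\ref{NE_Thm}; yours is arguably the more constructive of the two and is the one that generalises most directly to the computational setting of Section~\ref{comp_methods}.
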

\begin{proof} (verbatim from Nash's original article~\cite{nash1950equilibrium}) Any $n$-tuple of strategies, one for each agent, may be regarded as a
point in the product space obtained by multiplying the $n$ strategy spaces of the agents. One such $n$-tuple counters another if the strategy of each agent in the countering $n$-tuple yields the highest obtainable expectation for its agent against the $n$-1 strategies of the other agents in the countered $n$-tuple. A self-countering $n$-tuple is called an equilibrium point. \\

The correspondence of each $n$-tuple with its set of countering $n$-tuples gives a one-to-many mapping of the product space into itself. From the definition of countering we see that the set of countering points of a point is convex. By using the continuity of the pay-off functions we see that the graph of the mapping is closed. The closedness is equivalent to saying: if $P_i, P_2, \ldots$ and $Q_1, Q_2, \ldots, Q_n, \ldots$ are sequences of points in the product space where $Q_n \rightarrow Q$, $P_n \rightarrow P$ and $Q_n$ counters $P_n$ then $Q$ counters $P$. \\

Since the graph is closed and since the image of each point under the mapping is convex, we infer from Kakutani's theorem that the mapping has a fixed point (i.e. point contained in its image). Hence there is an equilibrium point.
\end{proof}

Using the notation above this can be rephrased more formally~\cite{dasgupta2015debreu}: Given an agent $i$, for every ($n$-1)-tuple $p_{-i}$ of the strategies of the other agents, there is a best response function: 
\begin{eqnarray} 
\psi(p_{-i}) & = & \{p_i \, | \, g_i(p_i,p_{-i})\geq g_i(p'_i,p_{-i}) \,\, \forall \,p'_i\}.
\end{eqnarray}
This function consists of the set of strategies $p_i$ that maximise agent $i$'s payoff given the other agents' strategies $p_{-i}$. Define $\psi$ as the product space:
\begin{eqnarray}
\psi(p_1, \ldots, p_n) & = & \psi(p_{-1})\times \ldots \times \psi(p_{-n})
\end{eqnarray}
Then $\psi$ is an upper-hemicontinuous, convex valued, and non-empty-valued correspondence that maps the set of $n$-tuples to itself. Kakutani's fixed point theorem provides proof of the existence of a fixed point such that: $(p^*_1, \ldots, p^*_n) \in \psi(p^*_1, \ldots, p^*_n)$ that also satisfies Equation~\ref{NE_eqn}.

\subsection{Computational methods \label{comp_methods}}

This subsection briefly reviews some of the methods for explicitly computing Nash equilibria.
The main focus will be on two-agent games, which is the special case $n = 2$ of
the setting introduced in the previous subsection.
We shall suppose that agent 1 has $l$ pure strategies available, and agent 2 has $m$
pure strategies. We put $a_{i,j} = g_1(c_1^i, c_2^j)$ and
$b_{i,j} = g_2(c_1^i, c_2^j)$, 
and let $A$ and $B$ denote the $l \times m$ matrices $(a_{i,j})$ and $(b_{i,j})$, respectively.
To simplify the notation for the probability distributions we put
$x_i = p_1(c_1^i)$, for $1 \le i \le l$, and
$y_j = p_2(c_2^j)$, for $1 \le j \le m$, and we let $x$ and $y$ denote the column vectors
$(x_i)$ and $(y_j)$, respectively. We have
\begin{eqnarray}
\sum_{i=1}^l x_i = 1,~ x \ge {\bf 0}_l,~\sum_{j=1}^m y_j = 1,~ y \ge {\bf 0}_m.
\end{eqnarray}
Then the expected payoffs to agents 1 and 2 are expressed succinctly
as $x^TAy$ and $x^TBy$, respectively, where $x^T$ denotes the transpose of $x$.
A Nash equilibrium for such a game is then a pair $(x^*, y^*)$ satisfying (6)
and, for all $(x,y)$ satisfying (6), $(x^*)^T A y^* \ge x^T A y^*$ and
$(x^*)^T B y^* \ge (x^*)^T B y$.
That is, a Nash equilibrium point is a pair of mixed strategies that are best
responses to each other.\\

An equivalent characterization of a Nash equilibrium point $(x^*, y^*)$ is
that $(x^*, y^*)$ must satisfy (6) and
\begin{eqnarray}
\forall i~[x_i^* > 0 \Rightarrow (Ay^*)_i = \max_k (Ay^*)_k],
\end{eqnarray}
\begin{eqnarray}
\forall j~[y_j^* > 0 \Rightarrow ((x^*)^TB)_j = \max_k ((x^*)^TB)_k].
\end{eqnarray}
This characterization is proved in \cite{nash1951non} (see Equation 4).
It could be roughly expressed as follows: a mixed strategy is a best response to an
opponent's strategy if and only if it uses pure strategies that are best
responses amongst all pure strategies. See also \cite{codenotti2011computational} (Lemma 4.17).\\

The following result summarizes a further useful transformation of the problem
of finding a Nash equilibrium point.
\begin{thm}
Assume that all elements of $A$ and $B$ are positive.
Then there is a bijection $\phi$ between the set of all Nash equilibrium points $(x^*, y^*)$
and the set of all pairs $(u^*, v^*)$ of non-zero vectors 
$u^* \in \mathbb{R}^l$ and $v^* \in \mathbb{R}^m$, such that
\begin{eqnarray}
u^TB \le {\bf 1}_m, ~u \ge {\bf 0}_l, ~Av \le {\bf 1}_l, ~v \ge {\bf 0}_m,~{\rm and}
\end{eqnarray}                                      
\begin{eqnarray}
\forall i~[u_i^* > 0 \Rightarrow (Av^*)_i = 1],
\end{eqnarray}
\begin{eqnarray}
\forall j~[v_j^* > 0 \Rightarrow ((u^*)^TB)_j = 1].
\end{eqnarray}
\end{thm}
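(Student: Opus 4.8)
The plan is to exhibit the bijection $\phi$ explicitly and then verify its two defining features — well-definedness in each direction, and mutual inversion. The whole construction is driven by the characterisation of Nash equilibria recorded in (7)--(8), while the hypothesis that $A$ and $B$ are \emph{strictly} positive enters in essentially one place: it forces the equilibrium payoffs to be non-zero, which is what makes the normalisations below legitimate and the correspondence genuinely onto the set of non-zero pairs.

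First I would set up the forward map. Given a Nash equilibrium $(x^*, y^*)$, put $\alpha = (x^*)^T A y^*$ and $\beta = (x^*)^T B y^*$; since every entry of $A$ and $B$ is positive and $x^*, y^*$ are non-zero points of the simplices (6), both $\alpha$ and $\beta$ are strictly positive. Writing $\alpha = \sum_i x_i^*(Ay^*)_i$ and invoking (7) (every pure strategy in the support of $x^*$ attains $\max_k (Ay^*)_k$) gives $\alpha = \max_k (Ay^*)_k$, hence $Ay^* \le \alpha\,{\bf 1}_l$; symmetrically (8) gives $\beta = \max_k ((x^*)^TB)_k$ and $(x^*)^TB \le \beta\,{\bf 1}_m$. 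I would then define $\phi(x^*,y^*) = (u^*,v^*)$ with $u^* = x^*/\beta$ and $v^* = y^*/\alpha$. Dividing the two inequalities just obtained by $\beta$ and $\alpha$ respectively yields $(u^*)^TB \le {\bf 1}_m$ and $Av^* \le {\bf 1}_l$; $u^*, v^* \ge {\bf 0}$ and $u^*, v^* \neq {\bf 0}$ are immediate; and (10)--(11) are precisely the rescaled forms of the implications in (7)--(8).

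Next I would construct the inverse. Given non-zero vectors $u^*, v^*$ satisfying (9)--(11), set $\sigma = \sum_i u_i^*$ and $\tau = \sum_j v_j^*$, both strictly positive because the vectors are non-negative and non-zero, and define $\psi(u^*,v^*) = (x^*,y^*)$ with $x^* = u^*/\sigma$ and $y^* = v^*/\tau$, which lie in the simplices (6). If $x_i^* > 0$ then $u_i^* > 0$, so (10) gives $(Av^*)_i = 1$ and hence $(Ay^*)_i = 1/\tau$; for every other $i$, (9) gives $(Av^*)_i \le 1$, hence $(Ay^*)_i \le 1/\tau$. Thus $\max_k (Ay^*)_k = 1/\tau$, attained exactly on the support of $x^*$, which is (7); the mirror-image argument from (11) and (9) gives (8). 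So $(x^*,y^*)$ is a Nash equilibrium point.

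Finally I would check that $\phi$ and $\psi$ invert one another; the only thing to watch is the scaling constants. Starting from a Nash equilibrium, $\phi$ produces $(u^*,v^*)$ with $\sigma = \sum_i x_i^*/\beta = 1/\beta$ and $\tau = 1/\alpha$, so $\psi$ returns $u^*/\sigma = x^*$ and $v^*/\tau = y^*$. Conversely, for $(x^*,y^*) = \psi(u^*,v^*)$ a short computation with the support identities just used shows the equilibrium payoffs are $\alpha = 1/\tau$ and $\beta = 1/\sigma$, so $\phi$ returns $u^*$ and $v^*$. I do not anticipate a genuine obstacle: the one step that really uses the hypothesis rather than routine bookkeeping is the positivity of $\alpha,\beta$ (and of $\sigma,\tau$), since if $A$ or $B$ were merely non-negative the division defining $\phi$, or the claim that $\psi$ lands on non-zero vectors, could fail.
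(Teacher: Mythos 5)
Your proposal is correct and follows essentially the same route as the paper: the identical normalisations $u^* = x^*/\bigl((x^*)^TBy^*\bigr)$, $v^* = y^*/\bigl((x^*)^TAy^*\bigr)$ in one direction and division by the coordinate sums in the other, with the verification resting on the support characterisation (7)--(8). You have simply filled in the details that the paper's proof leaves as ``not difficult to verify'' and ``straightforward to check,'' and those details are accurate.
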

\begin{proof}
Let $(x^*, y^*)$ be a given Nash equilibrium point for the game $A$, $B$.
Then $(x^*)^T A y^*$ and $(x^*)^T B y^*$ are positive, by assumption that all elements of
$A$ and $B$ are positive.
So we may put $u_i^* = x_i^*/((x^*)^T B y^*)$ for each $i$, 
$v_j^* = y_j^*/((x^*)^T A y^*)$ for each $j$,
and define $\phi(x^*, y^*) = (u^*, v^*)$.
It is not difficult to verify that $(u^*,v^*)$ satisfies relations (9-11),
and each component of this ordered pair is non-zero.
Conversely, suppose that non-zero vectors $u^*$ and $v^*$ are given which satisfy
relations (9-11). We put $x^* = u^*/(u_1^* + \cdots + u_l^*)$,
$y^* = v^*/(v_1^* + \cdots + v_m^*)$, and define
$\psi(u^*, v^*) = (x^*, y^*)$.
Then $(x^*, y^*)$ satisfies relations (6-8).
It is straightforward to check that the mappings $\phi$ and $\psi$ are inverses of each
other. Hence $\phi$ is a bijection as claimed (as is its inverse $\psi$).
\end{proof}
We remark that the assumption of the above theorem that all elements of $A$ and $B$ are
positive is not too restrictive. For, if the given payoff matrices
$A$ and $B$ do not satisfy this assumption, then we could add to all entries of $A$ and $B$
a sufficienty large positive constant to ensure $A$ and $B$ become positive,
without changing the essential nature of the game.\\

We now focus on describing a method due to Lemke and Howson \cite{lemke1964equilibrium}
for finding a pair of non-zero vectors $(u^*, v^*)$ which satisfy relations (9-11).
We shall also call such a pair of vectors a Nash equilibrium point.
Hereafter we shall use $x$ and $y$ to denote $u$ and $v$, respectively,
and similarly for the corresponding starred variables.
This slight abuse of notation is to ensure consistency with most of the literature
on this subject. Consider the polyhedral sets
$X = \{x \in \mathbb{R}^l~|~x \ge {\bf 0},~x^T B \le {\bf 1}\}$ and
$Y = \{y \in \mathbb{R}^m~|~y \ge {\bf 0},~Ay \le {\bf 1}\}$
in $\mathbb{R}^l$ and $\mathbb{R}^m$, respectively.
The method of Lemke and Howson is conveniently described in terms of
a labelling process for the vertices of $X$ and $Y$, originally proposed by Shapley
\cite{shapley1974note}.
Let $I = \{1,2, \ldots, l\}$ and $J = \{l+1, l+2, \ldots, l+m\}$ be label sets
for agent 1 and agent 2, respectively, and put $K = I \cup J$.
With each vertex $x$ of $X$ we associate a set $L(x)$ of labels from $K$
as follows. For $1 \le i \le l$, $x$ is given the label $i$ if $x_i = 0$.
For $1 \le j \le m$, $x$ is given the label $l+j$ if $(x^T B)_j = 1$.
With each vertex $y$ of $Y$ we associate a set $M(y)$ of labels from $K$ as follows.
For $1 \le j \le m$, $y$ is given the label $l+j$ if $y_j = 0$.
For $1 \le i \le l$, $y$ is given the label $i$ if $(A y)_i = 1$. \\

We say that $X$ is {\em nondegenerate} if each vertex $x$ of $X$ satisfies exactly $l$
equations from amongst the
$x_i = 0$, with $1 \le i \le l$, and the
$(x^T B)_j = 1$, with $1 \le j \le m$. 
Similarly, $Y$ is {\em nondegenerate} if each vertex
$y$ of $Y$ satisfies exactly $m$ equations from amongst the
$y_j = 0$, with $1 \le j \le m$, and the
$(A y)_i = 1$, with $1 \le i \le l$.
So, if $X$ and $Y$ are both nondegenerate then each vertex of $X$ has exactly $l$ labels
and each vertex of $Y$ has exactly $m$ labels.
A vertex pair $(x,y)$ has at most $l+m$ different labels taken altogether, and this number
could be less than $l+m$ if $L(x)$ and $M(y)$ have nonempty intersection.
A two-agent game defined by $A$ and $B$ is also called {\em nondegenerate} if
the corresponding polyhedral sets $X$ and $Y$ are nondegenerate.
The description of the Lemke-Howson method is simplest in case the game under consideration
is nondegenerate, and we shall assume this hereafter.\\

Consider pairs of vertices $(x,y) \in X \times Y \subset \mathbb{R}^{l+m}$.
We say that $(x,y)$ is {\em completely labelled} if
$L(x) \cup M(y) = K$. The importance of this concept is due to the fact that
$(x,y) \in X \times Y$ is completely labelled if and only if either 
$(x,y) = ({\bf 0}, {\bf 0})$ or $(x,y)$ is a Nash
equilibrium point for $A$ and $B$ (by Equations 10 and 11, and the definition of the
labelling scheme). To avoid the special case, the origin $({\bf 0}, {\bf 0})$ is
termed the {\em artificial equilibrium}.
Now let $k \in K$. We say that $(x,y)$ is {\em $k$-almost completely labelled}
if if $L(x) \cup M(y) = K - \{k\}$. A $k$-almost completely labelled vertex
has exactly one duplicate label, that is, exactly one label which belongs to $L(x) \cap M(y)$.\\

The Lemke-Howson algorithm is succinctly, abstractly, but not quite explicitly, described
in terms of following a certain path through a graph $G$ related to $X \times Y$.
Indeed, let $G$ be the graph comprising the vertices and edges of $X \times Y$.
Fix some $k \in K$.
An equilibrium point $(x^*, y^*)$ (which is completely labelled, as mentioned above) is adjacent
in $G$ to exactly one $k$-almost completely labelled vertex $(x', y')$, namely, that vertex
obtained by dropping the label $k$.
Similarly, a $k$-almost completely labelled vertex $(x,y)$ has exactly two neighbours
in $G$ which are either $k$-almost completely labelled or completely labelled.
These are obtained by dropping in turn one component of the unique duplicate label
that $x$ and $y$ have in common.
These two observations imply the existence of a unique $k$-almost completely labelled path in $G$ 
from any one given equilibrium point to some other one. (The end points of such a path
are completely labelled, but all other vertices on the path are $k$-almost completely labelled.)
The Lemke-Howson algorithm starts at the artificial equilibrium $({\bf 0}, {\bf 0})$.
Choosing $k \in K$ arbitrarily,
it then follows the unique $k$-almost competely labelled path in $G$
from $({\bf 0}, {\bf 0})$, step by step,
until it reaches a genuine equilibrium point $(x^*, y^*)$ whereupon it terminates.
The following result summarizes the conclusions we may draw from this description.
\begin{thm}
Let $A$ and $B$ represent a nondegenerate game and let $k$ be a label in $K$.
Then the set of $k$-almost completely labelled vertices together with the completely labelled ones,
and the set of edges joining pairs of such vertices, consist of disjoint paths and cycles.
The endpoints of the paths are the Nash equilibria of the game, including the
artificial one $({\bf 0}, {\bf 0})$. The number of Nash equilibria of the game is therefore odd.
\end{thm}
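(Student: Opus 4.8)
The plan is to recast the statement as a purely graph-theoretic parity argument on a finite graph, in the spirit of Shapley's labelling idea. Let $H$ denote the subgraph of $G$ whose vertices are all completely labelled and all $k$-almost completely labelled pairs $(x,y) \in X \times Y$, and whose edges are those edges of $X \times Y$ both of whose endpoints lie in this vertex set. First I would record two preliminaries. Because every entry of $B$ (respectively $A$) is positive, the constraints $x \ge \mathbf{0}$, $x^T B \le \mathbf{1}$ (respectively $y \ge \mathbf{0}$, $Ay \le \mathbf{1}$) bound each coordinate, so $X$ and $Y$ are bounded polyhedra; hence $G$ is a finite graph and every edge genuinely joins two vertices. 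Second, by the nondegeneracy hypothesis each vertex of $X$ makes exactly $l$ of its defining constraints tight and each vertex of $Y$ exactly $m$; therefore a completely labelled pair has $L(x)$ and $M(y)$ forming a partition of $K$, while for a $k$-almost completely labelled pair one has $|L(x)| + |M(y)| = l+m = |K \setminus \{k\}| + 1$, so $L(x) \cap M(y)$ is a single duplicate label $d$.

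The heart of the argument is a degree count in $H$, which I would isolate as two short lemmas: every completely labelled pair has degree exactly $1$ in $H$, and every $k$-almost completely labelled pair has degree exactly $2$. For the first lemma, note that $k$ lies in exactly one of $L(x^*)$, $M(y^*)$; relaxing the single tight constraint on that side which carries label $k$ traverses a unique edge of $X \times Y$, and the affine function in that constraint cannot become tight again at the other endpoint, so the new pair has total label set exactly $K \setminus \{k\}$, i.e. lies in $H$. Traversing any other edge relaxes a constraint carrying some label $\ell \ne k$; using that distinct defining constraints carry distinct labels and that (by nondegeneracy) exactly one constraint goes slack and exactly one new one becomes tight, the resulting pair is missing either just $\ell$ or both $k$ and $\ell$ from its total label set, hence is not in $H$. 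For the second lemma, let $d$ be the duplicate label of $(x,y)$, carried by one tight constraint on the $x$-side and one on the $y$-side; relaxing either of these two constraints lands in $H$ (the gained label is either $k$, giving a completely labelled pair, or $\ne k$, giving again a $k$-almost completely labelled pair), and the same bookkeeping shows every other edge leaves $H$. This yields exactly two neighbours.

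With the degree count established the conclusion is immediate. A finite graph in which every vertex has degree $1$ or $2$ is a disjoint union of simple paths and simple cycles, and its degree-$1$ vertices are precisely the endpoints of the path components. By the two lemmas the degree-$1$ vertices of $H$ are exactly the completely labelled pairs, which by the characterisation recalled just before the theorem statement are exactly the Nash equilibria of the game together with the artificial equilibrium $(\mathbf{0},\mathbf{0})$. Each path has exactly two endpoints, so the number of completely labelled pairs is even; removing $(\mathbf{0},\mathbf{0})$, the number of genuine Nash equilibria is odd. In particular it is positive, since the path component containing $(\mathbf{0},\mathbf{0})$ terminates at a genuine equilibrium — precisely the output of the Lemke--Howson algorithm — so the theorem also re-proves existence.

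I expect the only real obstacle to be the exact degree count in the second step, and within it the elimination of spurious neighbours: one must track carefully which label is lost and which is gained when a single constraint is relaxed along an edge, and lean on nondegeneracy to guarantee that exactly one constraint becomes slack, exactly one becomes tight, and the lost and gained labels are distinct (the affine-monotonicity remark handling the case of the relaxed constraint's own label). Once that combinatorial bookkeeping is pinned down, the boundedness of $X$ and $Y$, the decomposition into paths and cycles, and the final parity count are all routine.
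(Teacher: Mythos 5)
Your argument is correct and follows essentially the same route as the paper, which establishes exactly the two adjacency facts you prove (a completely labelled vertex has a unique $k$-almost completely labelled neighbour, and a $k$-almost completely labelled vertex has exactly two neighbours in the relevant subgraph) and then reads off the path/cycle decomposition and the parity conclusion. Your write-up merely supplies the constraint-relaxation bookkeeping that the paper leaves implicit, so no further comparison is needed.
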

The above result provides an alternative and constructive proof of Theorem 2.1
in the special case of two-agent nondegenerate games.
The algorithm can be described in a computationally explicit way using certain
concepts and techniques of the well known simplex method of linear programming.

Lemke and Howson \cite{lemke1964equilibrium} proposed perturbation techniques for dealing with
degenerate games. Eaves \cite{eaves1971linear} provided an explicit computational procedure
based on these ideas. A clear exposition of this procedure is contained in
\cite{mckelvey1996computation}.

For $n$-agent games, with $n > 2$, the problem of finding a Nash equilibrium is no longer
of linear character. Thus the Lemke-Howson algorithm cannot be applied directly.
Now a Nash equilibrium for an $n$-agent game can be characterized as a fixed point of
a certain continuous function from a product $\Delta$ of unit simplices into itself.
A tractable approach to computing an equilibrium point for such a game can then be based on a
path finding technique related to an algorithm of Scarf \cite{scarf1967approximation}
for finding fixed points of a continuous function defined on a compact set.
An exposition of such an approach, known as a {\em simplicial subdivision method},
is found in \cite{mckelvey1996computation}.

The methods recalled in this subsection concern computation of a single,
so called {\em sample}, equilibrium point for a game.
Such methods are generally not adaptable to the construction of {\em all}
equilibria of a given game, however.
Methods for locating all equilibria exist -- see section 6 of
\cite{mckelvey1996computation}.

Homotopy methods for computing equilibria have also been developed
\cite{herings2010homotopy}.

\section{Singularities in Economics}

In the introduction to Section~\ref{NE_sec} it was pointed out that a core element of economics is the study of applied problems in optimisation. Here this notion is made explicit by introducing catastrophe theory and its stochastic extension and then game theory and its stochastic extension emphasising the role of singularities in each. 

\subsection{Catastrophe theory}

Catastrophe theory was introduced as an approach to economic equilibria through the work of Zeeman in the 1970s~\cite{zeeman1974unstable} in which the attempt was made to explain the dynamics of financial markets using qualitative arguments based on Thom's earlier work in catastrophe theory. We follow~\cite{diks2016can} in establishing our general framework. We posit a potential function $G(\textbf{x}|\textbf{u}): \mathbb{R}^{n}\times \mathbb{R}^{k} \rightarrow \mathbb{R}$ of a vector of $n$ state variables $\mathbf{x}$ and a vector of $k$ control parameters $\mathbf{u}$~\cite{stewart1977catastrophe}. One of the simplest examples is a system in which there is one (time dependent) state variable $x_t$ and two control parameters $u_1$ and $u_2$. The dynamics of such a system are given by the ordinary differential equation:
\begin{eqnarray}
dx_t & = & -\frac{\partial G(x_t|u_1,u_2)}{\partial x_t} dt \label{dynamics}
\end{eqnarray} 
for which the stationary solutions of the system are given by: $\frac{\partial G(x_t|u_1,u_2)}{\partial x_t} = 0$. The following example is the starting point of recent analyses of asset markets~\cite{barunik2009can,diks2016can}: 
\begin{eqnarray}
G(x_t| u_1, u_2) & = & \frac{1}{4}x_t^4 - \frac{1}{2}u_1x_t^2 - u_2x_t. \label{quart_util}
\end{eqnarray} 
The stationary states of the system are the specific $x$ that solve the equation: 
\begin{eqnarray} 
\left.-\frac{\partial G(x_t | u_1, u_2)}{\partial x_t}\right|_{x} & = & -x^3 + u_1 x + u_2 \;\; =\;\;  0. \label{zeroes}
\end{eqnarray} 
Figure~\ref{cusp_cat} shows the time independent stationary points $x_t \equiv x$ as an equilibrium surface that is functionally dependent on the control parameters $u_1$ and $u_2$. \\ 

One of the most important results of Thom's work was the classification of catastrophes into one of seven {\it elementary catastrophes}, catastrophes for families of functions having no more than two state variables and no more than four codimensions. Following Stewart~\cite{stewart1977catastrophe} we first give two key definitions needed to state Thom's theorem, then the theorem is given, and an illustrative example then follows. \\

{\it Right equivalence}: Two smooth ($C^{\infty}$) functions $f:\mathbb{R}^n \rightarrow \mathbb{R}$ and $g:\mathbb{R}^n \rightarrow \mathbb{R}$ are said to be right equivalent if there is a local diffeomorphism $\phi: \mathbb{R}^n \rightarrow \mathbb{R}^n$ with $\phi(\mathbf{0}) = \mathbf{0}$ ($\mathbf{0}$ the zero vector) such that $f(\mathbf{x}) = g(\phi(\mathbf{x}))$ for all $\mathbf{x}$ in some neighbourhood of the origin. \\

{\it Codimension}: If $f: \mathbb{R}^n \rightarrow \mathbb{R}$ (the {\it germ} in the following) is smooth then the codimension of $f$ is the smallest $k$ for which there exists a $k$-dimensional smooth unfolding 
\begin{eqnarray}
G: \mathbb{R}^n \times \mathbb{R}^k \rightarrow \mathbb{R}
\end{eqnarray}
with
\begin{eqnarray}
G(\mathbf{x}\,|\,\mathbf{0}) & = & f(\mathbf{x}), \,\, \mathbf{x} \in \mathbb{R}^n
\end{eqnarray}
which is stable. If no such unfolding exists, the codimension is defined to be $\infty$. In this way the codimension measures the ``degree of instability'' of the function $f$.

\begin{thm} \cite{thom1975structural} Let $G:\mathbb{R}^{n}\times \mathbb{R}^{k} \rightarrow \mathbb{R}$ be a smooth stable family of functions each of which has a critical point at the origin, and suppose that $k \leq 4$. Set $f(\mathbf{x}) = G(\mathbf{x}\, |\, \mathbf{0})$ for all $\mathbf{x}$ in $\mathbb{R}^n$ near the origin. Then $f$ is right equivalent, up to a sign, to one one of the germs $f^{*}$ in Table~\ref{cat_table}.
\begin{table}[ht]
\centering
\caption{Thom's elementary catastrophes \label{cat_table}}
\begin{tabular}[t]{llcl}
\hline
Catastrophe name & Germ ($f^{*}$) & Codimension & Unfolding ($G^{*}$)\\
\hline
Fold & $x^3$ & 1 & $x^3 + ax$ \\
Cusp & $x^4$ & 2 & $x^4 + ax^2 + bx$ \\
Swallow tail & $x^5$ & 3 & $x^5 + ax^3 + bx^2 + cx$ \\
Butterfly & $x^6$ & 4 & $x^6 + ax^4 + bx^3 + cx^2 + dx$ \\
Hyperbolic umbilic & $x^3 + xy^2$ or $x^3 + y^3$ & 3 & $x^3 + y^3 + axy + bx + cy$ \\
Elliptic umbilic & $x^3 - xy^2$ & 3 & $x^3 - 3xy^2 + a(x^2+y^2) + bx + cy$ \\
Parabolic umbilic & $x^2y + y^4$ & 4 & $x^2y + y^4 + ax^2 + by^2 + cx + dy$ \\
\hline
\end{tabular}
\end{table}
\end{thm}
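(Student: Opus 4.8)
The plan is to recognise this as the classification theorem of elementary catastrophe theory and to deduce it from two standard pillars of singularity theory, the \emph{splitting lemma} and the \emph{finite determinacy theorem}, both of which rest on the Malgrange--Mather preparation theorem. First I would unpack the hypothesis: since $G$ is a stable (equivalently, versal) $k$-parameter unfolding of $f$ with $k\leq 4$, the codimension of $f$ is at most $4$, where codimension means $\dim_{\mathbb R}\mathfrak m_n/(J(f)\cap\mathfrak m_n)$ for the Jacobian ideal $J(f)=(\partial f/\partial x_1,\dots,\partial f/\partial x_n)$. In particular $f$ has an isolated critical point at the origin, and I may assume its Hessian is degenerate, since a nondegenerate Hessian gives codimension $0$ and, by the Morse lemma, $f$ right-equivalent to $\pm(x_1^2+\cdots+x_n^2)$, which plays no role in the table.

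Next I would invoke the splitting lemma: up to right-equivalence, $f(\mathbf x)=\hat f(x_1,\dots,x_r)\pm x_{r+1}^2\pm\cdots\pm x_n^2$, where $r$ is the corank of the Hessian and $\hat f\in\mathfrak m_r^{\,3}$; right-equivalence and codimension ignore the nondegenerate quadratic part, so it suffices to classify $\hat f$. A count of cubic forms shows that $r\geq 3$ forces $\mathrm{codim}(\hat f)\geq 5$, so $r\in\{1,2\}$. For $r=1$, writing $\hat f=a_kx^k+(\text{higher order})$ with $a_k\neq 0$, the finite determinacy theorem (applied via $\mathfrak m^{k+1}\subseteq\mathfrak m^2 J(\hat f)$) shows $\hat f$ is $k$-determined, hence right-equivalent to $\pm x^k$ of codimension $k-1$; the bound $k-1\leq 4$ leaves $k\in\{3,4,5,6\}$, the fold, cusp, swallowtail and butterfly, and the listed unfoldings are read off from the monomial basis $x,\dots,x^{k-2}$ of $\mathfrak m_1/J(\pm x^k)$.

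For $r=2$ I would classify the binary cubic $3$-jet of $\hat f$ under linear changes of coordinates: up to scaling it is (i) a product of three distinct real linear forms, $\sim x^3-xy^2$; (ii) a real form times a complex-conjugate pair, $\sim x^3+xy^2\sim x^3+y^3$; (iii) a form with a repeated factor, $\sim x^2y$; (iv) a triple factor, $\sim x^3$; or (v) zero. In cases (i) and (ii) the cubic is already $3$-determined and of codimension $3$, giving the elliptic and hyperbolic umbilics with the unfoldings shown. In (iii) one passes to higher order: killing the redundant quartic terms by a coordinate change leaves $x^2y+cy^4+(\text{order}\geq 5)$, and finite determinacy ($x^2y+y^4$ is $4$-determined) forces, when $c\neq 0$, the parabolic umbilic $x^2y+y^4$ of codimension $4$; if $c=0$ the codimension exceeds $4$. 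In (iv) and (v) a short computation shows the codimension always exceeds $4$ (for instance $x^3+y^4$ already has codimension $5$), so these are excluded. In each surviving case the stated $G^*$ is the universal unfolding, namely $f^*$ plus a basis of a complement of $J(f^*)$ in $\mathfrak m$, and since $G$ is versal of equal dimension it is right-equivalent to $G^*$ up to the permitted overall sign.

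The genuine obstacle is the two determinacy inputs: the splitting lemma, and the statement that an isolated singularity is finitely determined with jet order controlled by the codimension. These are precisely where the preparation theorem and Mather's infinitesimal-stability arguments enter, they are not elementary, and in a self-contained account this is the part that cannot be shortened. Everything else --- the binary-cubic normal forms, the bookkeeping that rules out corank $\geq 3$ and cases (iv)--(v), and the explicit unfolding bases --- is routine linear and commutative algebra.
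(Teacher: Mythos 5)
The paper does not actually prove this theorem: it is stated as a cited classical result (Thom, via Stewart's exposition), so there is no in-paper argument to compare yours against. Your outline is the standard and correct route to the classification --- reduce stability of the $k$-parameter family to $\mathrm{codim}(f)\leq 4$ via versality, split off the nondegenerate quadratic part, bound the corank, and then classify the residual germ by jet normal forms plus finite determinacy, reading the unfoldings off a monomial basis of $\mathfrak{m}/J(f^{*})$ --- and you correctly isolate the Malgrange--Mather preparation theorem and Mather's determinacy/versality results as the nontrivial inputs that cannot be made elementary. Two small remarks. First, your corank bound is stated more weakly than necessary but is still sufficient: the usual count ($J(\hat f)\subseteq\mathfrak{m}^{2}$ with only $r$ generators against $\dim\mathfrak{m}^{2}/\mathfrak{m}^{3}=\binom{r+1}{2}$) gives $\mathrm{codim}\geq\binom{r+1}{2}$, i.e.\ at least $6$ for $r=3$, so corank $\geq 3$ is comfortably excluded. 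Second, strictly speaking the conclusion should be that $f$ is right equivalent to a germ of the table \emph{plus} a nondegenerate quadratic form in the remaining $n-r$ variables; your appeal to the splitting lemma handles this, and the same mild abuse of language is present in the theorem statement itself.
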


The $a,b,c,d$ terms are control variables and the $x$ and $y$ are state variables. Further, $G$ is right equivalent, up to sign, to the expression $G^{*}$ on the same line as $f^{*}$ in Table~\ref{cat_table} (where right equivalence of function families is defined in a similar sense to that of functions). In this table the catastrophes describe the geometry of the projection onto control parameter space of the surface defined by the partial derivative: $\frac{ \partial G^*}{\partial x} = 0$.\\

The unfolding of $x^4$ results in the cusp catastrophe. A surprising result is that the Taylor series does not need to converge on $G$ or indeed any function and so the importance of this proof lies in the fact that a truncated Taylor series expansion provides a qualitatively (topologically) correct description of a system given by Equation~\ref{dynamics}. This is the case because this family, as do all of the elementary catastrophes, has the stability property that for sufficiently small perturbations of $G(x|u_1, u_2)$ the topology is left unaffected. \\

Thom's approach can be illustrated with a simple example from~\cite{stewart1977catastrophe}. Suppose we are given a smooth stable potential function $G(x\,|\,u_1, u_2)$ with one state variable $x$ and two control variables $u_1$, $u_2$. Regarding $G$ as a function family parametrised by $u_1$ and $u_2$, suppose that each member of the family has a critical point at the origin, and that $G(x\,|\,0, 0) = x^4 \, + \,$ higher order terms in $x$. Then, according to Thom's theorem, the function $G(x\,|\,0,0)$ is right equivalent to $x^4$, and the function family $G(x\,|\,u_1,u_2)$ is right equivalent to the unfolding $x^4 + u_1 x^2 + u_2 x$, for (possibly) adjusted state and control variables $x$, $u_1$ and $u_2$.\\


When catastrophe theory is applied to specific problems in the applied fields, the following interpretation of Thom's theorem is very useful (after~\cite[pg. 151]{stewart1977catastrophe}. The germ $f^*$, when observed in applications, might be topologically unstable in the sense that small perturbations to the system might result in qualitatively different behaviours. So in practice, if we observe $f^*$ we should also expect to see the rest of its unfolding $G^*$ as well, and this unfolding will be a topologically stable and qualitatively complete description of the system. We note that Berry~\cite{berry1982universal} has made the rather fascinating observation that a `battle of the catastrophes' emerges as a parameter $t$ varies through the catastrophe set on the singularity surface. That this generates power-law tails in a very specific fashion may have applications not yet explored.

\begin{figure}[!ht]
\center
\includegraphics[width=.9\columnwidth]{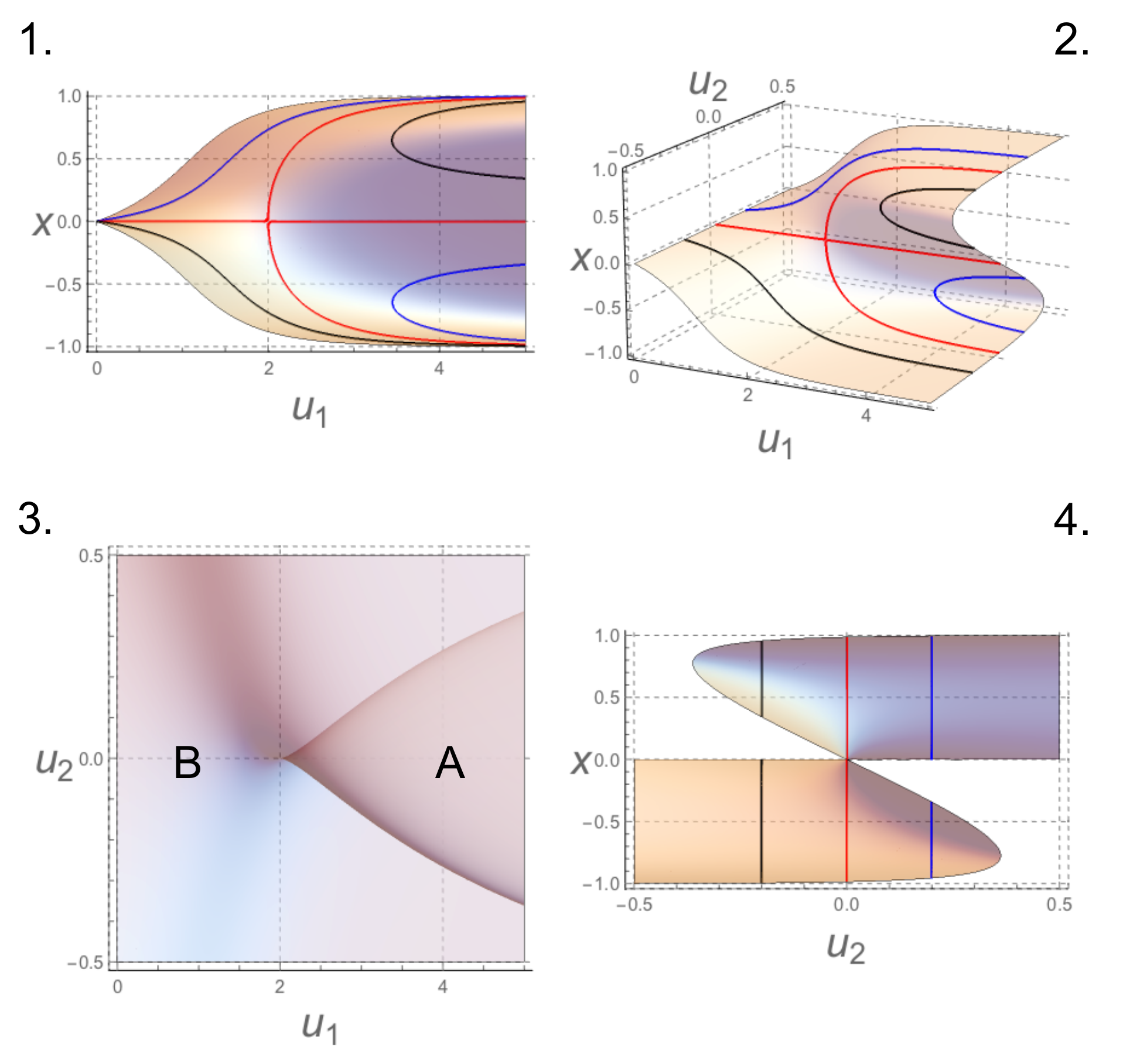}
\caption{\label{cusp_cat} The cusp catastrophe with three contours shown for fixed $u_2$ but variable $u_1$ control parameters. Plots 1, 2, and 4 show different perspectives of the equilibrium surface. Notice that the cross section $u_1=$ constant of Plot 4 has qualitatively similar properties to the manifold $\mathcal{M}$ of Figure~\ref{debreu_M}. Plot 2.: Two distinct regions $A$ and $B$ can be discerned in the projection of the equilibrium surface onto the control plane $\{u_1, u_2\}$, region $A$ has three equilibrium point and region $B$ has 1 equilibrium point. A Pitchfork bifurcation is shown in red and two fold bifurcations are shown in black and blue. Note that these plots correspond to the equilibrium surfaces (stationary states) that are the solutions to Equation~\ref{zeroes}.}
\end{figure}

\subsection{Stochastic Catastrophe Theory}

Catastrophe theory is a deterministic approach to modelling a system's dynamics but Cobb~\cite{cobb1978stochastic} was the first to extend catastrophe theory to stochastic differential equations which was later improved upon by Wagenmakers {\it et al}.~\cite{wagenmakers2005transformation}. The approach is to add noise to the evolutionary dynamics:
\begin{eqnarray}
dx_t & = & -\frac{\partial G(x_t|\mathbf{u})}{\partial x_t} dt  + \sigma(x_t) dW_t \label{SDE_cusp}
\end{eqnarray} 
where $-\frac{\partial G(x_t|\mathbf{u})}{\partial x_t} \equiv g(x_t)$ is the drift function, $W_t$ is a diffusion process and $\sigma(x_t)$ parameterises the strength of the diffusion process. The stationary probability distribution is given by~\cite{wagenmakers2005transformation}: 
\begin{eqnarray}
p(x|\mathbf{u}, \sigma(x)) & = & \mathcal{Z}^{-1}\exp\Big(2\int_{a}^{x_t} \frac{g(z) - 0.5(d_z\sigma(z)^2)}{\sigma(z)^2}dz\Big), \label{SDE_prob1} \\
p(x|\mathbf{u}, \xi) & = &  \mathcal{Z}^{-1}\exp\Big(\xi \int_{a}^{x_t} g(z)dz\Big)  \label{SDE_prob2}, \\
& = & \mathcal{Z}^{-1}\exp(G_{\sigma}(x)). \label{SDE_prob3}
\end{eqnarray} 
Here $\mathcal{Z}$ normalises the probability distribution, in Equation~\ref{SDE_prob1}: $d_z\sigma(z)^2 = \frac{d}{dz}\sigma(z)^2$, Equation~\ref{SDE_prob2} is a simplification in which: $2\sigma(z)^{-2} = \xi \equiv $ constant in $z$, and $G_{\sigma}(x) = \xi \int_{a}^{x_t} g(z)dz$ is the stochastic potential for the special case of $\sigma(z)$ being constant. \\

A potential function is necessary for catastrophe theory in order for there to be a gradient dynamic in Equation~\ref{dynamics}, this is equivalent to the symmetry of the Slutsky matrix~\cite{afriat2014demand} of economics. This symmetry is known to hold for {\it potential games}~\cite{sandholm2001potential} and so a potential function exists. The existence of a potential function is also equivalent to the existence of a Lyapunov function. If a (local) Lyapunov function does not exist then this is a significant obstacle to the use of catastrophe theory, however Thom has answered this objection~\cite{thom1977structural} by pointing out that near {\it any} attractor $A$ of {\it any} dynamical system there exists a local Lyapunov function. In evolutionary game theory, a dynamical extension to (static) economic game theory, a Lyapunov function can always be found for linear fitness functions that are of the type in Equation~\ref{exp_util}. The principal problem then is to confirm whether or not such a local Lyapunov function exhibits the bifurcation behaviour of catastrophe theory, and in general the answer is no~\cite{guckenheimer1973catastrophes}. Although this article covers what is properly called `elementary catastrophe theory', Thom hinted~\cite{thom1977structural} that there may be ways to accommodate these issues within the larger set of catastrophe theory that his methods encompass. It appears to still be an open question as to what extent catastrophe theory can be extended to accommodate these issues.

\subsection{Bifurcations in Nash equilibria}

To illustrate how bifurcations in the number of fixed points occur in the Nash equilibria of games we will use $G_2^2$ games, these games are described by two payoff matrices, one for each agent. Specifically, we consider {\it normal form}, two agent, {\it non-cooperative} games in which the agents $i = 1, 2$ select between one of two possible choices (pure strategies): $c_i^j \in C_i$. We refer to these games as $G_2^2$. The joint choices determine the utility for each agent $i$, $g_i: C \rightarrow \mathbb{R}$. The choices available to the agents and their subsequent payoffs are given by agent 1's payoff matrix:
{\renewcommand{\arraystretch}{2}
\begin{center}
    \begin{tabular}{cr|c|c|}
      & \multicolumn{1}{c}{} & \multicolumn{2}{c}{agent $2$}\\
      & \multicolumn{1}{c}{} & \multicolumn{1}{c}{$c^1_{2}$}  & \multicolumn{1}{c}{$c^2_{2}$} \\\cline{3-4}
      \multirow{2}*{agent $1$}  & $c^1_1$ & $g_1(c_1^1,c_2^1)$ & $g_1(c_1^1,c_2^2)$ \\\cline{3-4}
      & $c^2_1$ & $g_1(c^2_1,c_2^1)$ & $g_1(c^2_1,c_2^2)$ \\\cline{3-4}
    \end{tabular}
\end{center}}
\vspace{7mm}
and agent 2's payoff matrix:
{\renewcommand{\arraystretch}{2}
\begin{center}
    \begin{tabular}{cr|c|c|}
      & \multicolumn{1}{c}{} & \multicolumn{2}{c}{agent $2$}\\
      & \multicolumn{1}{c}{} & \multicolumn{1}{c}{$c^1_{2}$}  & \multicolumn{1}{c}{$c^2_{2}$} \\\cline{3-4}
      \multirow{2}*{agent $1$}  & $c^1_1$ & $g_2(c_1^1,c_2^1)$ & $g_2(c_1^1,c_2^2)$ \\\cline{3-4}
      & $c^2_1$ & $g_2(c^2_1,c_2^1)$ & $g_2(c^2_1,c_2^2)$ \\\cline{3-4}
    \end{tabular}
\end{center}}
\vspace{7mm}
Both matrices are often more compactly written as a single bi-matrix, for example the following bi-matrix is the well known {\it Prisoner's Dilemma} game where the vectors of joint payoffs are written $(g_1(c^i_1,c^j_2),g_2(c^i_1,c^j_2))$:
{\renewcommand{\arraystretch}{2}
\begin{center}
    \begin{tabular}{cr|c|c|}
      & \multicolumn{1}{c}{} & \multicolumn{2}{c}{agent $2$}\\
      & \multicolumn{1}{c}{} & \multicolumn{1}{c}{cooperate}  & \multicolumn{1}{c}{defect} \\\cline{3-4}
      \multirow{2}*{agent $1$}  & cooperate & $(-1,-1)$ & $(-3,0)$ \\\cline{3-4}
      & defect & $(0,-3)$ & $(-2,-2)$ \\\cline{3-4}
    \end{tabular}
\end{center}}
\vspace{7mm}
In the extended form of $g$, in which: 
\begin{eqnarray}
g: (g_1(p_1,p_2),g_2(p_1,p_2)) & \rightarrow & \mathbb{R}^2
\end{eqnarray} 
is an upper-hemicontinuous space\footnote{Upper-hemicontinuity does not hold if only discrete strategies can be used, and in order to use Kakutani's fixed point theorem it is a necessary condition that the agent's strategies are upper-hemicontinuous}, the bi-matrix representation is only useful in describing the discrete valued payoffs. \\

\begin{figure}[ht]
\center
\includegraphics[width=0.95\columnwidth]{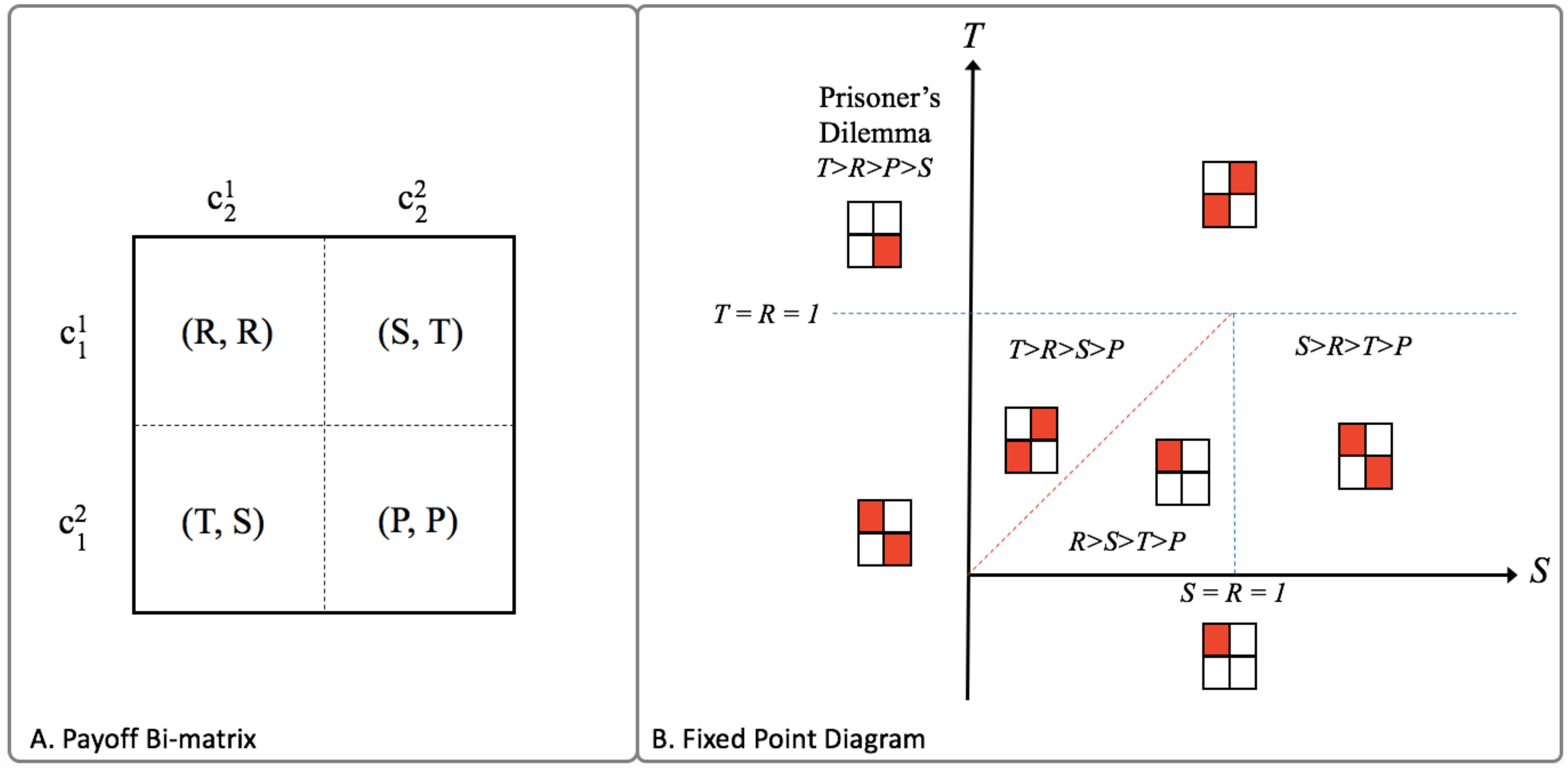}
\caption{\label{Bif_dia} {\bf A.} The payoff bi-matrix for a subset of $G_2^2$ games. The payoffs are symmetrical for simplicity, if the payoff matrix to agent 1 is $A_1$, then the payoff matrix for agent 2 is $A_2 = A_1^T$. Two parameters are held fixed: $R=1$ and $P=0$, leaving a two parameter ($T$ and $S$) family of games for which the number and location of the Nash equilibrium fixed points will vary as $T$ and $S$ vary. {\bf B.} A schematic diagram showing how the number and location of fixed points are a function of $T$ and $S$. The location of the pure strategy Nash equilibria are shown as red squares on a representation of the bi-matrix in each sector. For example the Nash equilibrium of the Prisoner's Dilemma is the pure strategy: $(p_1^2=1,p_2^2=1)$ which is located at the bottom right of the bi-matrix $(P,P)$ in the left diagram, and shown as a red square in the Prisoner's Dilemma region of the right diagram. Figure adapted from~\cite{hauert2001fundamental}}
\end{figure}

In order to see how singularities can occur in game theory we parameterise the payoff values in each agent's payoff matrix in such a way that allows for two control parameters and one state variable for each agent, i.e. the expected utility $g_i(p^*_i,p^*_{-i})$, at each Nash equilibrium, this approach is shown in Figure~\ref{Bif_dia}. Note that generally there are either one or three Nash equilibrium and Figure~\ref{Bif_dia} only shows the pure strategy Nash equilibria of the games. In the case where there are two pure strategy Nash equilibria then there is also a third mixed strategy Nash equilibria that is not shown in Figure~\ref{Bif_dia}.\\

To illustrate the change in the number of fixed points, begin by looking at the Prisoner's Dilemma game in the top left of the diagram. At this point $S<0$ but as $S$ increases and passes through $S=0$ two new pure strategy Nash equilibria form at the $(T,S)$ and $(S,T)$ pure strategies and a third mixed strategy equilibria forms that is not shown. Alternatively, beginning again from the Prisoner's Dilemma but decreasing $T$ from $T>1$ to $T<1$ the original pure strategy at $(P,P)$ remains but a new pure strategy equilibrium forms at $(R,R)$, as well as a mixed strategy equilibrium that is not shown. From this region of the game space, providing $1>T>0$, then letting $S$ increase from $S<0$ to $S>0$ we see that there remains three equilibrium points but the two pure strategy equilibria at $(R,R)$ and $(P,P)$ switch to $(S,T)$ and $(T,S)$. Other transitions in the state  space follow similar patterns. \\

\subsection{Quantal Response Equilibrium}

The quantal response equilibrium (QRE) is an extension of the Nash equilibrium concept developed by McKelvey and Palfrey~\cite{mckelvey1995quantal} in which agents do not perfectly optimise their choices, as in the Nash equilibrium, but instead there is some error in the choice they make, represented by stochastic uncertainty in their choices. The Nash equilibrium is recovered as a parameter representing the uncertainty in the decision tends to infinity, and so the Nash equilibria are a subset of the fixed points given by the QRE. \\

There are several different methods by which the QRE can be arrived at, McKelvey and Palfrey used differential topology and recent work by one of the authors used the method of maximising the entropy~\cite{wolpert2012hysteresis,harre2014strategic}. For our purposes we will simply define the relevant terms and state the Logit functional form of the QRE. We note that there is nothing special in this form of the QRE correspondence. It is to be expected that all of the work in the current article, in particular the analysis of bifurcations parameterised by a noise term $\beta$, will carry across to any regular QRE functional form. \\

Agent $i$'s expected utility $g_i(p_i,p_{-i})$ can be said to be {\it conditional} on $i$'s discrete choice $c_i^j$: $g_i(p_{-i}|c_i^j)$. The interpretation of $g_i(p_{-i}|c_i^j)$ is that it is the expected utility to agent $i$ if they choose $c_i^j$, i.e. they fix $p_i^j = 1$, while all of the other agents maintain a (possibly mixed) joint strategy $p_{-i}$. The definition of the equilibrium points given by the QRE are then the joint distributions $(p^*_1, \ldots, p^*_n)$ given by:
\begin{eqnarray}
p^*_i(c^j_i | \beta_i ) & = & \frac{\exp(\beta_i g_i(p^*_{-i}|c_i^j))}{\sum_{j} \exp(\beta_i g_i(p^*_{-i}|c_i^j)))} \label{QRE_1}
\end{eqnarray}
This satisfies the criteria of a probability distribution over agent $i$'s space of $j$ choices and the exponentiated function $\beta_i g_i(p^*_{-i}|c_i^j)$ is the product of a control parameter $\beta_i$ specific to each agent and a gradient $g_i(p^*_{-i}|c_i^j)$, cf. Equations~\ref{SDE_prob1}-\ref{SDE_prob3}. The parameter $\beta_i \in [0,\infty]$ controls the level of noise or uncertainty the agent has in selecting each strategy, when $\beta_i = 0$ the agent selects uniformly across their choices and when $\beta_i \rightarrow \infty$ the Nash equilibria of the game are recovered. For intermediate values of $\beta_i$ the agent prefers (up to some statistical uncertainty) one strategy over another only if it has a higher payoff, assuming all other agents are playing a known distribution over their own strategies. The probability is conditional on $\beta_i$ to remind us that the distribution has one free control parameter. \\

\begin{figure}[ht]
\center
\includegraphics[width=0.95\columnwidth]{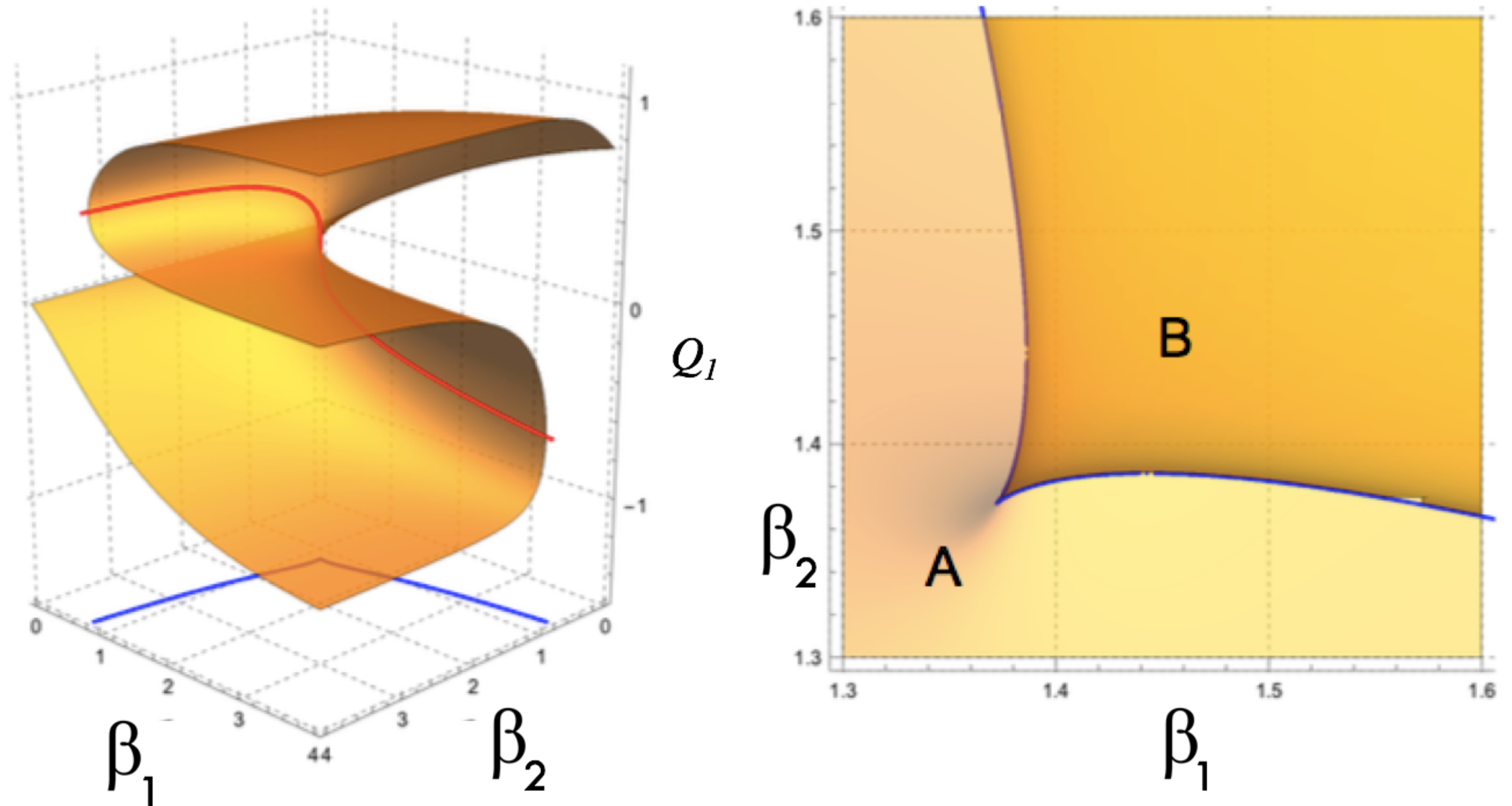}
\caption{\label{chicken_bif_set} Left Plot: The QRE surface for one of the agents in the {\it Chicken} game. The red curve on the surface is the set of critical points for the game, the blue curve in the $[\beta_1, \beta_2]$ plane is the projection of these critical points onto the control parameter space. Right Plot: View projecting down onto the control pane showing the regions of the control plane that have either 1 fixed point (A) or three fixed points (B).}
\end{figure}

Bifurcations in the QRE were first analysed by McKelvey et al~\cite{mckelvey1996statistical} (the {\it chicken game} discussed next) and a pitchfork bifurcation in the {\it battle of the sexes} game is covered in~\cite[Figure 6.3, page 153]{goeree2016quantal}. The equilibrium points in these plots were computed using the symbolic mathematics package Mathematica, however we note that there is also some very interesting work on computation of (Logit) QRE using homotopy methods. These methods can also be used to compute Nash equilibria, as limit points of the QRE, see~\cite{turocy2005dynamic}, cf. Section~\ref{comp_methods}. For our purposes the battle of the sexes serves to illustrate the bifurcations that can occur in the QRE as a covector of parameters $[\beta_1, \beta_2]$ are varied. The payoff bi-matrix for discrete strategies is given by:

{\renewcommand{\arraystretch}{2}
\begin{center}
    \begin{tabular}{cr|c|c|}
      & \multicolumn{1}{c}{} & \multicolumn{2}{c}{agent $2$}\\
      & \multicolumn{1}{c}{} & \multicolumn{1}{c}{Swerve}  & \multicolumn{1}{c}{Straight} \\\cline{3-4}
      \multirow{2}*{agent $1$}  & Swerve & $(0,0)$ & $(-1,+1)$ \\\cline{3-4}
      & Straight & $(+1,-1)$ & $(-10,-10)$ \\\cline{3-4}
    \end{tabular}
\end{center}}
\vspace{7mm}
This game represents interactions between two people, such as the game often seen in movies where drivers of two cars are heading directly towards one another. The challenge is for each driver to choose either {\it straight} or {\it swerve}, the driver who chooses straight wins and the driver who swerves loses. If they both choose to swerve the game is a draw, if they both choose straight both drivers crash into each other and lose. There are two pure strategy Nash equilibria: one where agent 1 swerves while agent 2 goes straight, the other where agent 1 goes straight while agent 2 swerves. These can be identified immediately by noting that, for either $[$swerve, straight$]$ or $[$straight, swerve$]$ neither agent can achieve a higher payoff by unilaterally changing their choice while the other agent's choice remains fixed. There is also a mixed strategy equilibrium. \\

When these game parameters are put into the QRE it can be shown that there is one fixed point for $[\beta_1, \beta_2] = [0,0]$ and three fixed points for $[\beta_1, \beta_2] = [\infty,\infty]$, these three fixed points correspond to the Nash equilibria of the Chicken game. In varying these parameters away from zero and towards $\infty$ there is necessarily some values of these parameters at which new fixed points emerge, a subset of these are shown in Figure~\ref{chicken_bif_set}. This bifurcation diagram was hinted at in McKelvey and Palfrey's original work and recent work by Wolpert and Harr\'e has explored some of the variety of this system of equations ~\cite{wolpert2011strategic,wolpert2012hysteresis,harre2013simple,harre2014strategic}.\\

To show this for $G_2^2$ games, first we rescale the probabilities so that: $Q_i = 2p_i^{j_i} - 1 \in [-1,1]$ for the probability of one of agent $i$'s choices so that we can work with a single variable $Q_i$ for each agent and we write the QRE as the functional relationships: 
\begin{eqnarray}
Q_i & = & f_i(Q_{-i}, \beta_i), \label{Qi1} \\
& = & f_i(f_{-i}(Q_{i},\beta_{-i}), \beta_i), \\
Q_{-i} & = & f_{-i}(Q_{i}, \beta_{-i}), \label{Qi2}\\
& = & f_{-i}(f_{i}(Q_{-i},\beta_i), \beta_{-i}).
\end{eqnarray} 
To find the set of singular points of the QRE surface we compute the Jacobian of the system by first finding all four terms of the form $\frac{\partial Q_i}{\partial \beta_k}$, for example~\cite{wolpert2011strategic}:
\begin{eqnarray}
\frac{\partial f_i}{\partial Q_{-i}}\frac{\partial f_{-i}}{\partial Q_i}\frac{\partial Q_i}{\partial \beta_i} + \frac{\partial f_i}{\partial \beta_i} - \frac{\partial Q_i}{\partial \beta_i} & = & 0 \\
f^1_i f^1_{-i} \frac{\partial Q_i}{\partial \beta_i} + f^2_i - \frac{\partial Q_i}{\partial \beta_i} & = & 0  \label{dummy} \\
\frac{f^2_i}{f^1_i f^1_{-i} -1} & = &  \frac{\partial Q_i}{\partial \beta_i} 
\end{eqnarray}
The simplification in notation at equation~\ref{dummy} uses the subscript to denote either the first or the second argument of $f_i$ in equations~\ref{Qi1} and~\ref{Qi2} that differentiation is with respect to.  A similar set of computations results in the Jacobian: 
\begin{eqnarray}
\mathbf{J}_{\beta}Q & = & \frac{1}{f^1_i f^1_{-i} -1}
\begin{bmatrix}
    f^2_{i}       & f^1_{i}f^2_{-i} \\
    f^2_{i}f^1_{-i}       & f^2_{-i}
\end{bmatrix}
\end{eqnarray}
which is singular when $f^1_i f^1_{-i} -1=0$. This set of solutions is shown as the red curve in the left plot of Figure~\ref{chicken_bif_set} and the projection of this set onto the control plane is shown in the right plot. When game theory is used as the basis for the construction of macroeconomic models of an economy this set is called the critical set of the economy. Sard's theorem allows us to conclude that the critical set of economies has measure zero~\cite{debreu1976regular,debreu1984economic} \\

\begin{figure}[h!t]
\center
\includegraphics[width=0.95\columnwidth]{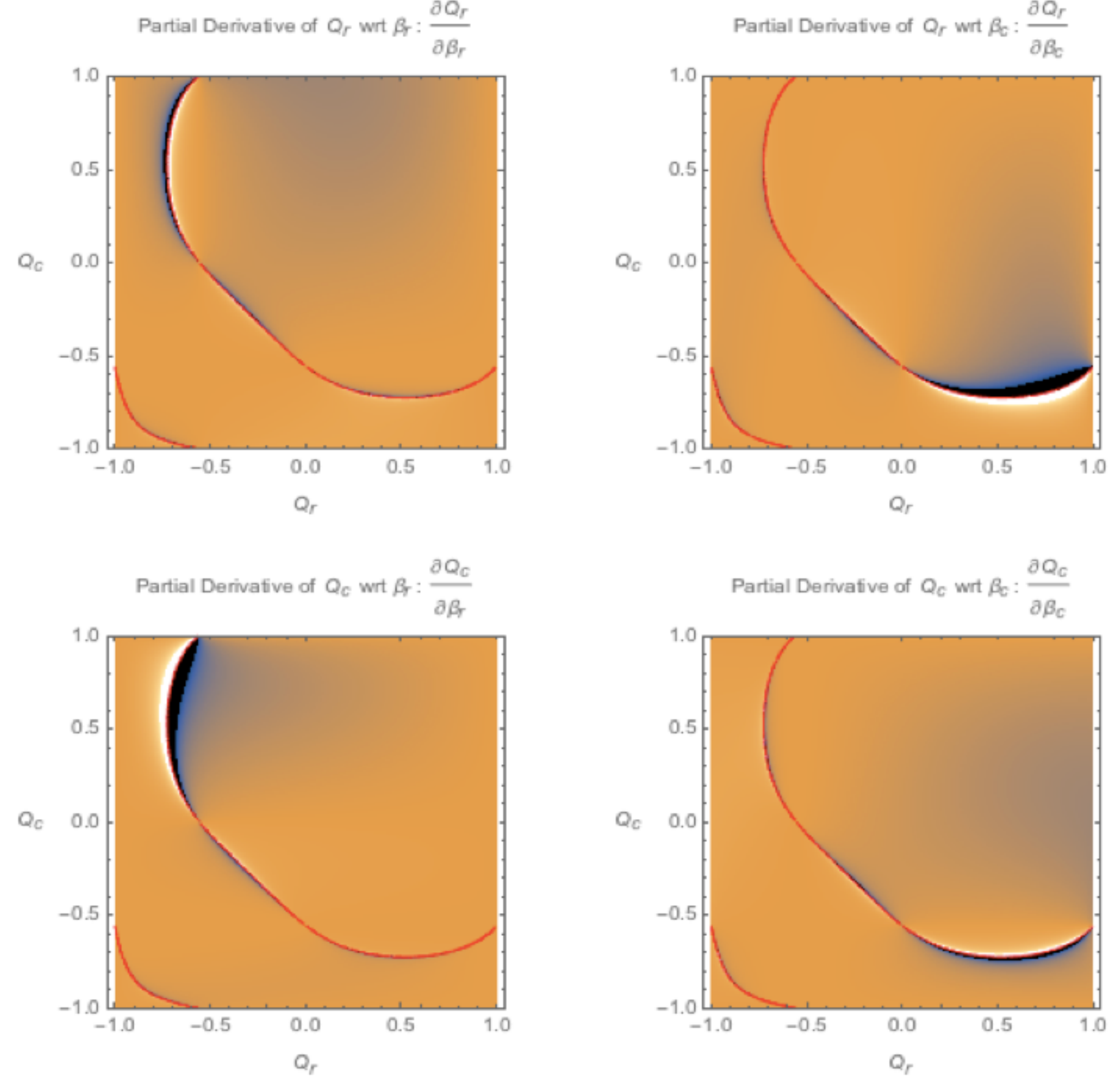}
\caption{\label{QQ_bif_set} The complete space of matrices $\mathbf{J}_{\beta}Q$ in which the colour represents the value of the gradient $\frac{\partial Q_i}{\partial \beta_j}$. The red curves are the set of {\it critical economies} and correspond to the red curves of Figure~\ref{chicken_bif_set}}. 
\end{figure}

Figure~\ref{chicken_bif_set} is one of the two mappings $f_j: \beta_i\times \beta_{-i}\rightarrow Q_j,$ $j\in \{i, -i\}$. An alternative representation is to consider the map $\mathbf{J}_{\beta}Q: Q_i \times Q_{-i} \rightarrow \mathbb{R}^{2\times 2}$ where  $\mathbb{R}^{2\times 2}$ is the family of matrices of partial derivatives $\frac{\partial Q_i}{\partial \beta_j} \in [-\infty, \infty]$ for $G_2^2$ games. This can be plotted directly as shown in Figure~\ref{QQ_bif_set} where we have used the Chicken game as an example. In order to generate these plots, a large number of explicit values for $\frac{\partial Q_i}{\partial \beta_j}$ were calculated and plotted, the results generate the background (golden-tan) colouring of the plots. Then the singularity set was calculated by implicitly solving the equation $f^1_i f^1_{-i} -1=0$ and overlaying these results on the plot of $\frac{\partial Q_i}{\partial \beta_j}$. It can be seen that where $\frac{\partial Q_i}{\partial \beta_j}$ diverges is where $f^1_i f^1_{-i} -1=0$. \\

One of the interesting points that arises in the representation of Figure~\ref{QQ_bif_set} is that, except possibly at finite and isolated points, the critical economic states partitions the $Q_i \times Q_{-i}$ space into three parts. An interpretation of this is that, if all you can observe of an economy is its distribution over states (the $Q_i$'s) then moving smoothly about this space is restricted by the partitions, there are states of the economy that are not accessible from a regular economy without passing through a critical state of the economy. \\ 


\section{Discussion}

The application of deterministic catastrophe theory to economics has a fraught history that has been well documented~\cite{rosser2007rise} but there are a number of articles that have been published recently that have begun a counter-swing in applications~\cite{rosser2013catastrophe,wilson2012catastrophe,jakimowicz2010catastrophes,diks2016can,kudinov2011catastrophes,barunik2015realizing} as well as some theoretical work~\cite{kudinov2011catastrophes}. In this article we have sought to briefly cover some of the recent material that has not elsewhere been collected in the hope that it might stimulate theoretical work to complement recent applied research.\\

We note that the original catastrophe theory, either deterministic or stochastic, is somewhat limited for direct economic application. The singularities that are traditionally studied require {\it a priori} assumptions on the potential function that may not be easily justified in an economic context. In particular, there is no agent-agent interaction explicit in the formulations of the potential function, and these interactions are central to economics. \\

On the other hand, the QRE has notable similarities with stochastic catastrophe theory. The similarities in the structures of equations~\ref{SDE_prob3} and~\ref{QRE_1} are notable, but their interpretation is not equivalent. Equation~\ref{SDE_prob3} is a single equation with a non-linear potential function whereas Equation~\ref{QRE_1} is a set of equations with linear interaction terms. \\

It might be tempting to assume that the exponentiated terms in these equations are both potentials, and a potential function is a necessary condition for the use of catastrophe theory, however this is also not immediately obvious. Ay {\it et al}~\cite{ay2017information} derived a potential function for game theory of the general form: 
\begin{eqnarray}
V(p) & = & \frac{1}{2}\sum_{i,j}a_{i,j}p_ip_j \label{game_potential}
\end{eqnarray}
where $A = a_{i,j}$ is a (constant valued) payoff matrix that describes all of the interactions between agents and the $p_i$ are the strategies of the agents. In this case, in order for $V(p)$ to be a potential function the following relationship needs to hold:
\begin{eqnarray}
a_{i,j} + a_{j,k} + a_{k,i} & = & a_{i,k} + a_{k,j} + a_{j,i} 
\end{eqnarray}
This condition is met if the matrix $A$ is symmetric. The relationship between these equations and their interpretations has yet to be explored. \\

These considerations have restricted the family of games to which catastrophe theory can be applied to those with symmetric payoff matrices. In economic theory, this is known to be satisfied by only a few games, in particular the {\it potential games} of Sandholm~\cite{sandholm2001potential} and a few other examples given in Rosser Jr.~\cite{rosser2007rise}. In general, the adoption of catastrophe theory in economics has been infrequent and very few examples beyond the cusp catastrophe have been studied (see~\cite{casti1975catastrophe} for a notable exception using the Butterfly catastrophe applied to housing markets). However, as Rosser Jr. has pointed out, there is considerable value to be had in adding catastrophe theory to the tool box of methods that economists use. More generally, there is a great deal of untouched territory in the formal analysis of the dynamics of economic systems near market crashes (singularities) in which there have been rapid and at times uncontrolled transitions between stable economic states in recent times. \\

{\bf Acknowledgement:} We are very thankful for the support of Laurentiu Paunescu and the diversity of people who were brought together for the JARCS workshop. M.S. Harr\'e was supported by ARC grant DP170102927.

\section*{Addendum: Some elementary remarks on the analytical foundations of Quantal equilibrium theory for $G_{2}^{2}$-games.}

\vspace{.2in}
This section will reiterate in the simplest possible terms the thread of mathematical ideas running through sections 3.3 and 3.4,
 connecting Game Theory as represented by elementary $G^{2}_{2}$-games to singularities, via the {\em quantal} approach to equilibria
proposed by McKelvey and Palfrey. Hence we will
consider only two players, each with the freedom to choose two strategies. The terms {\em player} and {\em strategy} (not
to mention {\em game}) have always been interpreted somewhat abstractly. The first of these could easily be replaced by
{\em agent} or {\em particle ensemble} without any loss of information. Similarly the word ``strategy" could easily be replaced 
by {\em state}. We will see that an appropriate choice of terms has everything to do with the manner in which equilibrium
is defined in relation to a specific mathematical model. Every $G_{2}^{2}$-game corresponds to a unique pair of $2\times 2$
utility matrices $U_{i}$ over the real numbers. The ``pure" strategies for each player are unit vectors $(1,0)$ and $(0,1)$
whereas ``mixed" strategies are pairs $\underline{x}^{i} = (x_{1}^{i}, x_{2}^{i})$ such that $0\leq x_{j}^{i}\leq 1$ and $x_{1}^{i}
+ x_{2}^{i} = 1 \ , \ i = 1,2$. The {\em utility functions} $g^{i}$ are then defined as a standard inner product
\[  g^{i}(\underline{x}^{1}, \underline{x}^{2}) = \langle \underline{x}^{1}, U_{i}\underline{x}^{2}\rangle \ .\]
Since $x^{i}_{2} = 1 - x^{i}_{1}$, these formulae reduce to a pair of quadratic functions $q_{i}(x,y)$ whose domain is 
restricted to the unit square $S$, such that $x = x^{1}_{1}$ and $y = x^{2}_{1}$. Now define 
\[\delta q_{1}(y) = q_{1}(1,y) - q_{1}(0,y)\hspace{.2in}\text{and}\hspace{.2in} \delta q_{2}(x) = q_{2}(x,1) - q_{2}(x,0) \ ,\]
and introduce non-decreasing, piecewise-continuous functions $F_{i}:{\Bbb R}\rightarrow [0,1]$, such that
\[\lim_{X\rightarrow -\infty}F_{i}(X) = 0\hspace{.2in}\text{and}\hspace{.2in}\lim_{X\rightarrow\infty}F_{i}(X) = 1 \hspace{.2in} (i = 1,2) \ .\]

Finally, let $\psi_{1}(x) = F_{1}(\delta q_{2}(x))$, $\psi_{2}(y) = F_{2}(\delta q_{1}(y))$, and define a map $\Psi : S\rightarrow S$ such that
$\Psi(x,y) = (\psi_{2}(y), \psi_{1}(x))$. An {\em equilibrium strategy} (or {\em equilibrium state}) will then correspond to a fixed point
\[ \Psi(x^*, y^*) = (x^*, y^*) \ .\]

Two special cases are of particular interest here. First, suppose the $F_{i}$ are defined in terms of the Heaviside function with parameter
$0\leq a\leq 1$: 
\[ H(X, a) = \left\{\begin{array}{ccc} 1 & if & X > 0\\
                                      0 & if & X < 0\\
                                      a & if & X = 0\\
\end{array}\right.\]
so that
\[\Psi(x,y) = (H(\delta q_{1}(y), x), H(\delta_{2}(x), y)) \ .\]
This in fact defines the ``best response" map in Nash's theory of equilibria. Note that
\[ \psi_{1}(x) = \left\{\begin{array}{ccc} 1 & if & q_{2}(x,1) > q_{2}(x,0)\\
                                           0 & if & q_{2}(x,1) < q_{2}(x,0)\\
\end{array}\right. \ ,\]
and 
\[ \psi_{2}(y) = \left\{\begin{array}{ccc} 1 & if & q_{1}(1,y) > q_{1}(0,y)\\
                                           0 & if & q_{1}(1,y) < q_{1}(0,y)\\
\end{array}\right. \ .\]
Since the $\delta q_{i}$ are linear functions, these relations indicate that apart from the pure-strategy Nash-equilibria 
to be found among the vertices of $S$, a unique mixed-strategy Nash equilibrium occurs 
precisely when $\delta q_{1}(y) = \delta q_{2}(x) = 0$. The choice of $F$ is well-suited to the classical conception of
game theory, in which equilibria are ``rationally optimized" strategies available to both players. 

We turn now to a second model, in which players do not rationally fix their strategy in response to that of their opponent.
Instead, strategies are governed by a smooth probability distribution. Applied to a large sample space of random trials of 
a specific game, this model may
or may not realistically reflect the behaviour of human populations, though it is naturally adapted from numerically large particle-ensembles of 
the kind encountered in statistical thermodynamics. A standard probability distribution associated with the partition function which lies at 
the foundation of this theory is given as
\[ F_{i}(X) = \frac{1}{1 + e^{-\beta_{i}X}} \hspace{.2in} (i = 1,2) \ ,\]
where the parameters $\beta_{i}$ are strictly positive real numbers. Note that
\[ \psi_{1}(x) = \frac{1}{1 + e^{-\beta_{1}\delta q_{2}(x)}} = \frac{e^{\beta_{1}q_{2}(x,1)}}{e^{\beta_{1}q_{2}(x,1)} + e^{\beta_{1}q_{2}(x,0)}} \ ,\]
and similarly 
\[ \psi_{2}(y) = \frac{e^{\beta_{2}q_{1}(1,y)}}{e^{\beta_{2}q_{1}(1,y)} + e^{\beta_{2}q_{1}(0,y)}} \ .\]

$\psi_{1}(x)$ may be interpreted as the probability $y$ that ensemble 2 lies in state 1, given that the probability of ensemble 1 lying in the same state 
is $x$. Conversely $\psi_{2}(y)$ represents the probability $x$ that ensemble 1 lies in state 1, given a probability $y$ that ensemble 2 lies in the same state.
Given two events $A$ and $B$, the symmetry between the classical definitions of conditional probabilities $P(A | B)$ and $P(B | A)$ simply occurs when 
$P(A) = P(B)$, where in general the probability of one event is construed as conditional on the actual occurrence of the other, as implied by Bayes' Theorem.
By contrast, the model above estimates the probability of one event as conditional on a given {\em probability} of the other, as is traditionally postulated in the world 
of quantum interactions. Symmetry occurring in the relations $y = \psi_{1}(x)$
and $x = \psi_{2}(y)$ is then another way of characterizing points of equilibrium. Since $S$ is a closed and bounded subset of ${\Bbb R}^{2}$, and in the present situation
$\Psi: S\rightarrow S$ is a continuous map, the existence of fixed points is implied by the fact that any sequence of iterations $\{\Psi(x_{k}, y_{k})\}_{k=1}^{\infty}$
where $(x_{k}, y_{k}) = \Psi(x_{k-1}, y_{k-1})$, must have a convergent subsequence. This is essentially the Fixed Point Theorem of Brouwer. Note, however, that if
the distributions $F_{i}$ are not strictly continuous, as in the case of Nash equilibria, then the more general theorem of Kakutani may still be applied. With respect to 
an appropriately chosen norm on function-space, the Heaviside distribution may in fact be recovered as an asymptotic $\beta$-limit of the smooth distribution above.
 For generic values of $(\beta_{1}, \beta_{2})$ the number 
and location within $S$ of the fixed points of $\Psi$ (for a given choice of the matrices $U_{i}$) will depend smoothly on these parameters, which suggests that the set
of fixed points for each $(\beta_{1}, \beta_{2})$ will generate a possibly branched topological covering of the parameter-plane, corresponding to the surface
\[ \Sigma = \{x = \psi_{2}(y)\}\cap\{y = \psi_{1}(x)\} \subset{\Bbb R}_{+}^{2}\times S \ ,\]  
where ${\Bbb R}_{+} = (0,\infty)$. Now let 
\[ f_{1}(x,\beta_{1}, \beta_{2}) = x - \psi_{2}(\psi_{1}(x))\hspace{.2in}\text{and}\hspace{.2in} f_{2}(y,\beta_{1}, \beta_{2}) = y - \psi_{1}(\psi_{2}(y)) \ .\]
According to the Implicit Function Theorem,
\[ \frac{\partial f_{1}}{\partial x}(x^{*}, \beta_{1}^{*}, \beta_{2}^{*}) = 1 - \frac{\partial\psi_{2}}{\partial y}(\psi_{1}(x^{*}))\frac{\partial\psi_{1}}{\partial x}(x^{*})\neq 0\]
implies the existence of a function $x = \varphi_{1}(\underline{\beta})$ in a neighbourhood of $\underline{\beta}^{*}$. Similarly, 
\[ \frac{\partial f_{2}}{\partial y}(y^{*}, \beta_{1}^{*}, \beta_{2}^{*}) = 1 - \frac{\partial\psi_{1}}{\partial x}(\psi_{2}(y^{*}))\frac{\partial\psi_{2}}{\partial y}(y^{*})\neq 0\]
implies the existence of a function $y = \varphi_{2}(\underline{\beta})$ in a neighbourhood of $\underline{\beta}^{*}$. Hence, in a neighbourhood of any regular 
equilibrium point $(\underline{\beta}
^{*}, x^{*}, y^{*})$, the surface $\Sigma$ is parametrized smoothly by functions $\varphi = (\varphi_{1}, \varphi_{2})$. 

Conversely, for every $(\beta_{1}, \beta_{2}, x^{*}, y^{*})\in\Sigma$ belonging to the critical locus
\[ \Gamma = \{1 - \frac{\partial\psi_{1}}{\partial x}(x^{*})\frac{\partial\psi_{2}}{\partial y}(y^{*}) = 0\} \ ,\]
the standard projection $\pi:{\Bbb R}^{2}\times S\rightarrow{\Bbb R}^{2}$ maps $\Gamma\cap\Sigma$ to the ``branch locus"
\[\pi(\Gamma\cap\Sigma)\subset{\Bbb R}_{+}^{2} \ .\]

In this formulation of equilibrium without reference to a potential function, there is no apparent classification of critical loci
in terms of Thom's elementary catastrophes. The projection $\pi$ above suggests rather that the branch locus might be understood via normal forms
of {\em generic mappings} between smooth surfaces, as in the classic Theorem of Whitney (cf., e.g.~\cite{golubitsky2012stable}). This would seem to be the case for games of the sort represented in Fig. 4, but the general situation is not immediately clear. In particular, it must be asked whether all
mappings $\pi\mid_{\Sigma}$ which arise in the context above for $G_{2}^{2}$-games are in fact generic, in the sense that their first jet extension is
always transversal to the corank-one submanifold inside the jet space $J^{1}(\Sigma, {\Bbb R}_{+}^{2})$.

\vspace{.2in}

\bibliographystyle{plain}
\bibliography{Singularities_bib}

\end{document}